\newtheorem{claim}{Claim}
\newtheorem{definition}{Definition}
\newtheorem{lemma}{Lemma}
\newtheorem{theorem}{Theorem}
\newtheorem{corollary}{Corollary}
\title{Effects of graph operations on star pairwise compatibility graphs}
\author{Angelo Monti\inst{1}\orcidID{0000-0002-3309-8249} \and
Blerina Sinaimeri\inst{2}\orcidID{0000-0002-9797-7592} }
\author{A. Monti\footnote{Computer Science Department, Sapienza University of Rome, Italy,
\href{mailto:monti@di.uniroma1.it}{monti@di.uniroma1.it}}
 \and B. Sinaimeri\footnote{Luiss University, Rome, Italy. \href{mailto:bsinaimeri@luiss.it}{bsinaimeri@luiss.it}}}
\begin{document}
\maketitle

\begin{abstract}
A graph $G=(V,E)$ is defined as a star-$k$-PCG when it is possible to assign a positive real number weight $w$ to each vertex $V$, and define $k$ distinct intervals $I_1, I_2, \ldots I_k$, in such a way that there is an edge $uv$ in $E$ if and only if the sum of the weights of vertices $u$ and $v$ falls within the union of these intervals.  The star-$k$-PCG class is connected to two significant categories of graphs, namely PCGs and multithreshold graphs.  The star number of a graph $G$, is the smallest $k$ for which $G$ is a star-$k$-PCG. In this paper, we study the effects of various graph operations, such as the addition of twins, pendant vertices, universal vertices, or isolated vertices, on the star number of the graph resulting from these operations. As a direct application of our results,  we determine the star number of lobster graphs and provide an upper bound for the star number of acyclic graphs.\\
\noindent
\textit{Keywords:} {graph operations \and   star-$k$-PCGS \and multithreshold graphs \and acyclic graphs}
\end{abstract}

\maketitle

A $k$-pairwise compatibility graph $G$ (shortly $k$-PCG), also referred to as a multi-interval PCG \cite{ahmed17,calamoneri2022}, is a type of graph characterized by the existence of a non-negative edge-weighted tree $T$ and $k$ distinct intervals $I_1, I_2, \ldots, I_k$ of non-negative real numbers. In such a graph, each vertex of $G$ corresponds to a leaf of $T$, and an edge between two vertices in $G$ is present if the distance between their corresponding leaves in $T$ falls within $I_1 \cup I_2 \cup \ldots \cup I_k$. The tree $T$ serving this function is known as the $k$-witness tree. The concept of $1$-PCGs, also known as PCGs, originated from the problem of reconstructing phylogenetic trees \cite{KPM03} and it has proven valuable in analyzing rare evolutionary events, including horizontal gene transfer (see \textit{e.g.} \cite{Long2020}). 

Notice that for a $k$-PCG the witness tree is not unique. An example of a graph $G$ together with two different witness trees is provided in Figure~\ref{fig:example-pcg}. 
In this study, we focus on $k$-PCGs for which there exists a witness tree that is a star, and refer to these graphs as star-$k$-PCGs \cite{monti2024starkpcgs}. 

One of the main reasons to study star-$k$-PCGs is that, while determining in polynomial time whether a given graph is a $k$-PCG remains an open problem for any constant $k$ (including $k=1$), the problem appears to be more tractable for star-$k$-PCGs. In fact, polynomial-time algorithms for identifying graphs that are star-$1$-PCGs have already been proposed in the literature \cite{Xiao2020,Kobayashi22}. Nevertheless, for $k>1$, the problem remains open even for star-$k$-PCGs. 

Another reason for studying this class comes from the connection it has with multithreshold graphs \cite{Jamison20}. In \cite{Kobayashi22,monti2024starkpcgs} it is shown that the class of star-$k$-PCGs is equivalent to the class of $2k$-threshold graphs, which has gained considerable interest within the research community since its introduction in \cite{Jamison20}, as evidenced by the following studies \cite{Jamison20,Puleo2020,Jamison2021,chen2022,Kittipassorn2024}. These connections and subsequent studies highlight the significance the significance and  applicability of these graph classes in understanding complex evolutionary processes. 

To study star-$k$-PCGs, it is natural to define the \emph{star number} of a graph $G$, denoted as $\gamma(G)$, which is the smallest positive integer $k$ such that $G$ is a star-$k$-PCG \cite{monti_ictcs23,monti2024starkpcgs}. 

\begin{figure*}
\centering
\includegraphics[scale =0.37]{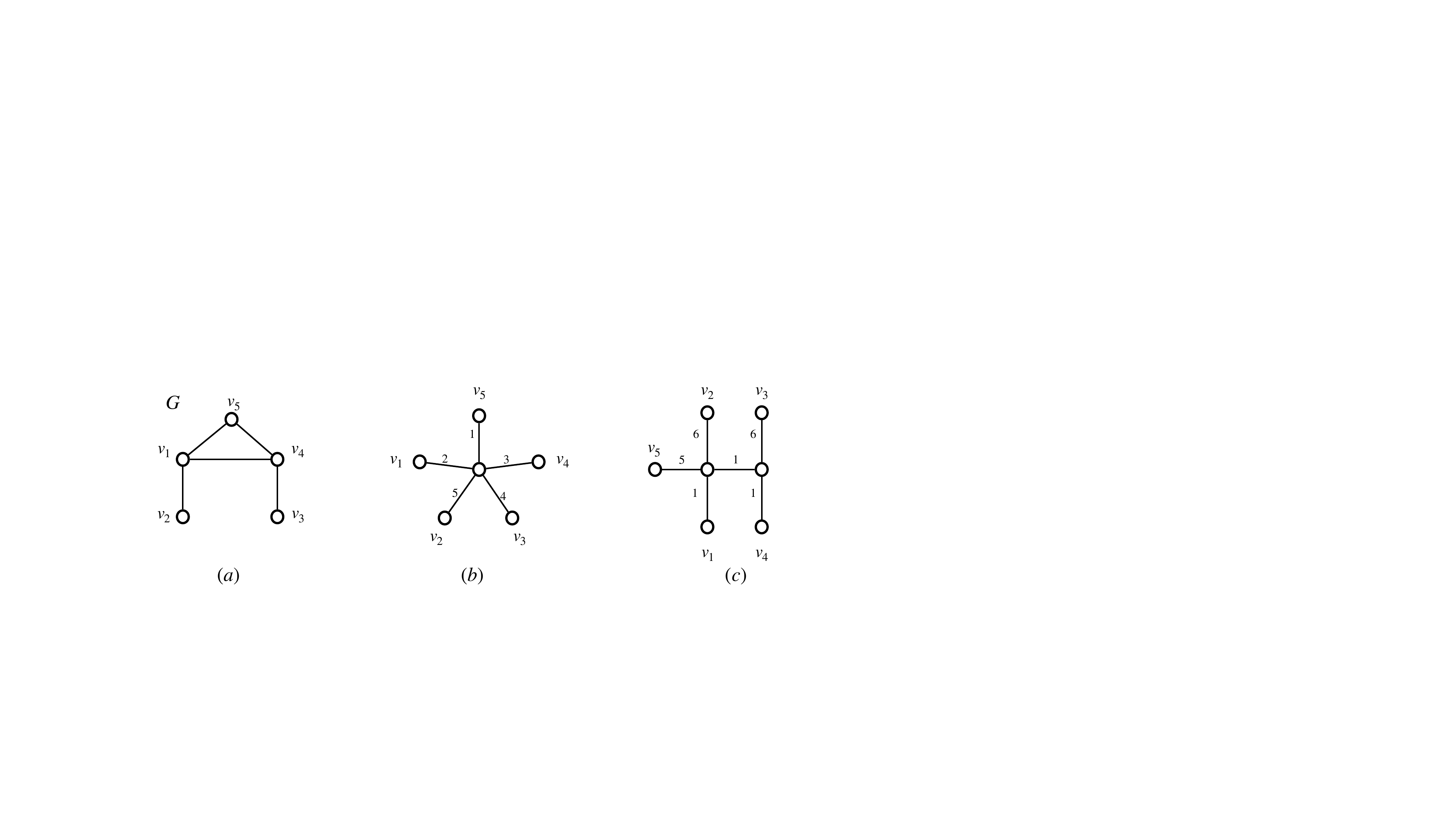} 
\caption{$(a)$ a graph $G$;  $(b)$ a star $2$-witness tree for $G$ with the corresponding intervals $I_1=[3,5]$ and $I_2=[7,7]$; $(c)$ a $1$-witness tree for $G$ with the corresponding interval $I_1=[3,7]$.}\label{fig:example-pcg}
\end{figure*}



In this study, we determine the star number of various simple graph classes. Our approach involves analyzing how specific graph operations impact the star number. This method is chosen because many graph classes arise from simpler ones through these operations. For instance, Xiao et al. \cite{Xiao2020op} have explored the impact of different graph operations on PCGs. Building on this research, we examine the effects of certain graph operations on the star number in star-PCGs. Specifically, we consider the following operations: adding universal and isolated vertices, incorporating true/false twins, appending a pendant vertex, and complementing a graph.

As an application of our results we determine the star number of lobsters and provide an upper bound for the star number of acyclic graphs. More specifically, we show that the star number of any lobster of radius at least  three, is 2, while the star number of any acyclic graph is at most its radius number. This represents an advancement beyond the well-known  general result stating that for any graph $G=(V,E)$,  $\gamma(G) \leq |E|$ \cite{ahmed17}. A direct corollary of our findings is that the multithreshold number \cite{Jamison20} of lobsters ranges between $3$ and $4$, and for acyclic graphs, it is at most twice their radius number. This is particularly interesting because obtaining good bounds on the multithreshold number is notoriously difficult. Indeed, for acyclic graphs, the exact answer is known only for caterpillars \cite{Jamison20}.

\begin{figure*}
\centering
\includegraphics[width=\linewidth]{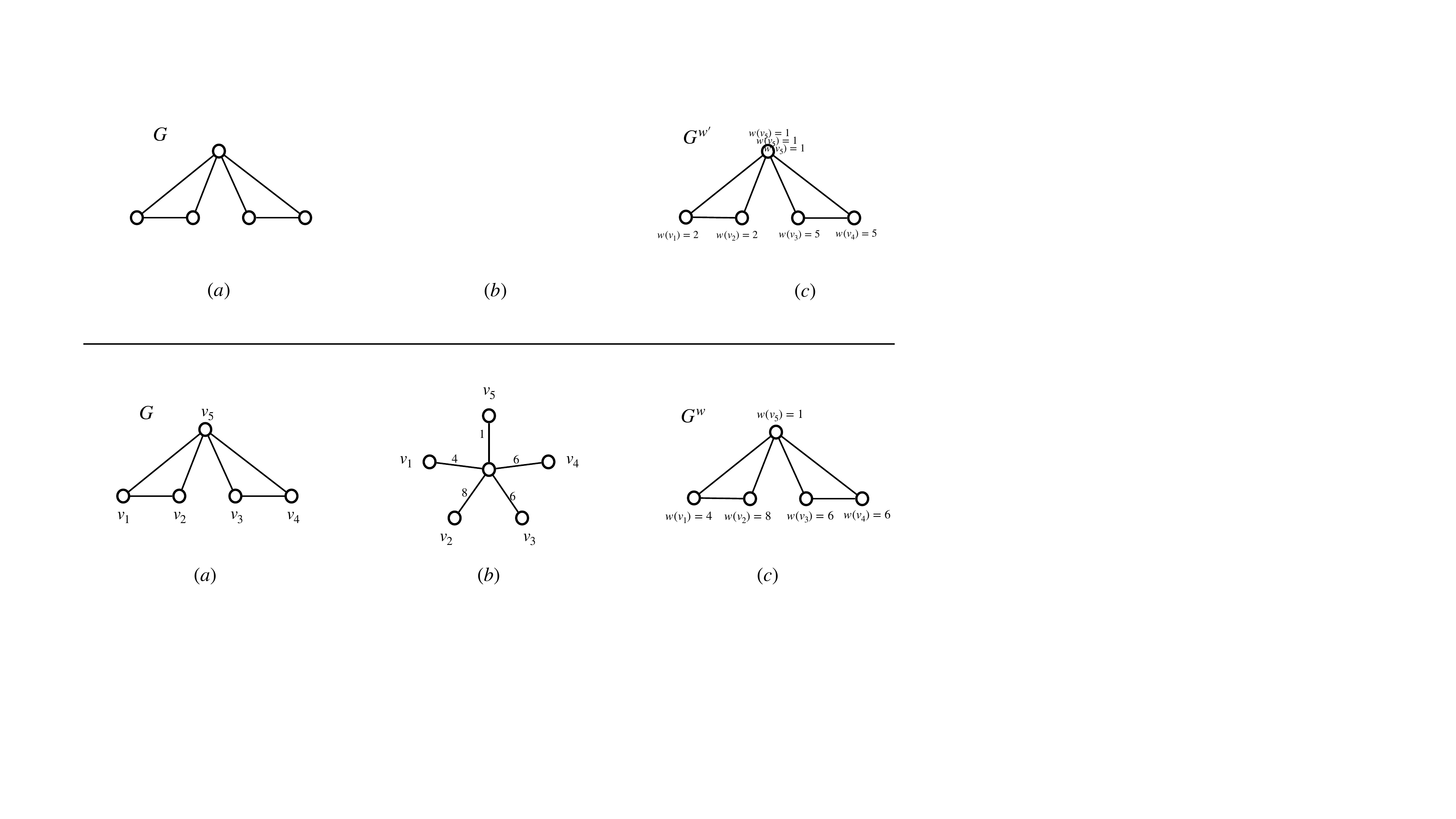} 
\caption{$(a)$ an example of a star-$2$-PCG $G$;  $(b)$ a star witness tree with the corresponding intervals $I_1=[5,9]$ and $I_2=[12,12]$; $(c)$ the witness graph $G^w$ and the corresponding intervals $I_1=[5,9]$ and $I_2=[12,12]$.}\label{fig:example-star}
\end{figure*}

\section{Preliminaries}\label{sec:Preliminaries}
In this paper we  only consider simple graphs, that is graphs that contain no loops or multiple edges. We focus  on undirected graphs and  for simplicity, we use a notational shorthand, writing $uv$ to represent the unordered pair  $\{u,v\}$.  For a graph $G=(V,E)$ and a vertex $u\in V$, the set $N(u)=\{v: uv\in E\}$ is called the \emph{neighborhood} of $u$.  

Two distinct vertices $u$ and $v$ in $G$ are called \emph{true twins} if $N(u)\cup\{u\} = N(v)\cup \{v\}$, and \emph{false twins} if $N(u)=N(v)$. 

A vertex $u$ of a graph $G=(V,E)$ is said \emph{isolated} if it has degree 0, \textit{i.e.} $|N(u)|=0$; is said \emph{universal} if it has degree $|V|-1$, \textit{i.e.} $|N(u)|=|V|-1$; and is said \emph{pendant} if it has degree 1, \textit{i.e.} $|N(u)|=1$. For a pendant vertex $v$ we will denote by $p(v)$ the only vertex adjacent to $v$. A pendant edge is an edge that is incident to a pendant vertex.

For any integer $n \geq 1$ we denote by $P_n$ the path on $n$ vertices.

 A \emph{caterpillar} is a tree in which all the vertices are within distance 1 of a central path. The central path contains only vertices of degree at least 2 (\textit{i.e.} vertices that are not leaves).

A \emph{lobster} is a graph such that when we delete its leaves, we obtain a caterpillar .

The \emph{distance} between two vertices $u$ and $v$ is the length (number of edges) of a shortest path between $u$ and $v$. 
For a connected graph the \emph{eccentricity} of a vertex $v$, denoted as $ecc(v)$, is defined as the maximum distance from $v$ to any other vertex in the graph. The  \emph{radius} of a graph $G$, denoted as $rad(G)$, is the minimum eccentricity among all the vertices in $G$. A vertex of minimum eccentricity is called \emph{center} of the graph.  For a disconnected graph $G$, we use the convention that the eccentricity (radius) of $G$ is the maximum eccentricity (radius) of its connected components.

For star-$k$-PCGs, the distance between any two vertices in the witness star tree is the sum of the weights of the two edges incident to these vertices. Therefore, it is possible to eliminate the star tree and use a definition that employs weights on the vertices of the graph (see Figure~\ref{fig:example-star}). Thus, throughout the paper, we will use the following equivalent definition of star-$k$-PCGs.

\begin{definition}
A graph $G=(V,E)$ is a star-$k$-PCG if there exists a weight function $w: V \rightarrow \mathcal{R}^+$ and $k$ mutually exclusive intervals $I_1, I_2, \ldots I_k$, such that there is an edge $uv \in E$ if and only if $w(u)+w(v) \in \bigcup_i I_i$ \cite{monti2024starkpcgs}. Such a vertex weighted graph $G^w$ is called the \emph{$k$-witness} of $G$.
\end{definition}

For a $k$-witness graph $G^w$ and for each vertex $v$,  $w(v)$ denotes the weight of the vertex $v$ and for each pair of vertices $u$ and $v$  of $G$ we denote by $w(uv)$ the sum $w(u)+w(v)$. 





Given a $k$-witness graph $G^w$, for the sake of simplicity we will always assume that the  $k$ disjoint intervals $I_1=[a_1,b_1], \ldots I_k=[a_k, b_k]$ are ordered in increasing order, that is for all $1\leq i \leq k-1$, $b_i < a_{i+1}$.

\begin{claim}\label{claim:non_edge_between_intervals}
 Let $G$ be a graph with star number $k>1$ and let $G^w$ be a $k$-witness graph and $I_i=[a_i, b_i]$ for $1\leq i \leq k$ the corresponding intervals.  Then for any $1\leq i \leq k-1$ there exist a non-edge $e \not \in E$ such that $b_i < w(e) < a_{i+1}$. 
\end{claim}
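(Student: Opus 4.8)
The plan is to prove this by contradiction: suppose for some $i$ with $1 \leq i \leq k-1$ there is no non-edge $e$ whose weight lies strictly between $b_i$ and $a_{i+1}$. I would then argue that in this case the two consecutive intervals $I_i$ and $I_{i+1}$ can be merged into a single interval without changing which pairs of vertices form edges, contradicting the minimality of $k = \gamma(G)$.

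More precisely, consider replacing the $k$ intervals $I_1,\dots,I_k$ by the $k-1$ intervals obtained from merging $I_i$ and $I_{i+1}$ into the single interval $J = [a_i, b_{i+1}]$ and keeping all other intervals unchanged. I would verify that $G^w$ with this new set of $k-1$ intervals is still a witness for $G$, which amounts to checking that no pair $uv$ changes its edge/non-edge status. The only pairs whose membership in the union of intervals could change are those with $b_i < w(uv) < a_{i+1}$, since these are exactly the weights that were outside $I_i \cup I_{i+1}$ before the merge but lie inside the merged interval $J$ after it. By our assumption, no \emph{non-edge} has weight strictly in the open interval $(b_i, a_{i+1})$; and an \emph{edge} $uv$ must already satisfy $w(uv) \in \bigcup_j I_j$, so in particular its weight cannot lie strictly between $b_i$ and $a_{i+1}$ (that open gap is disjoint from every original interval). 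Hence no pair has weight in $(b_i, a_{i+1})$ at all, so the merge changes nothing and $G^w$ with $k-1$ intervals still witnesses $G$.

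I expect the main obstacle, such as it is, to be handling the interval endpoints carefully: the claim asks for a non-edge with $b_i < w(e) < a_{i+1}$ strictly, so I must be sure that merging to $J=[a_i,b_{i+1}]$ does not accidentally absorb any pair whose weight equals $b_i$ or $a_{i+1}$ in a way that alters its status — but since $b_i \in I_i$ and $a_{i+1}\in I_{i+1}$ are already covered before merging, such pairs keep their edge status, so only the strictly interior gap matters. This produces a valid witness with $k-1$ intervals, contradicting the minimality of $k$, and since $k>1$ guarantees at least one such consecutive pair exists, the claim follows.
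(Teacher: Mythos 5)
Your proposal is correct and follows exactly the paper's own argument: assume the gap $(b_i, a_{i+1})$ contains no non-edge weight, merge $I_i$ and $I_{i+1}$ into $[a_i,b_{i+1}]$, and derive a contradiction with the minimality of $k$. Your write-up simply fills in the verification (that edges cannot have weights in the gap either, so the merge changes no pair's status) that the paper leaves implicit.
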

\begin{proof}
If there exists an $1\leq i \leq k-1$ for which the condition of the claim does not hold, then the two intervals $I_i, I_{i+1}$ could be merged in a single interval $[a_i, b_{i+1}]$ and $G^w$ would be a $(k-1)$-witness, contradicting the hypothesis that the star number of $G$ is $k$.  
\end{proof}

Consider a graph $G$ with star number $k$ and let $G^w$ be a $k$-witness  of $G$, with intervals $I_i=[a_i, b_i]$ for $1\leq i \leq k$. 
We say that $G^w$ is \emph{left-free} (similarly, \emph{right-free}) if for every non-edge $e$ we have $w(e)> b_1$ (similarly, $w(e) < a_k$)).



Notice that the $2$-witness graph depicted in Figure~\ref{fig:example-star}b is left-free, while the one in Figure~\ref{fig:example-star}c is both left and right free. Finally, in Figure~\ref{fig:universal_a} we provide an example of a witness graph that is not free.  

\begin{lemma}\label{lem:left-right-unbalanced}
Let $G$ be a graph with $\gamma(G)=k$, then there exists a left-free $k$-witness $G^w$ if and only if there exists a right-free $k$-witness $G^{w'}$.
\end{lemma}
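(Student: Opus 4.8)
The plan is to use a reflection of the weights. Suppose $G^w$ is a left-free $k$-witness with intervals $I_i=[a_i,b_i]$, $1\le i\le k$, ordered increasingly. I would fix a constant $M$ larger than $\max_{v\in V} w(v)$ and define a new weight function $w'(v)=M-w(v)$; the choice of $M$ guarantees $w'(v)>0$, so $w'$ is a legitimate weight function into $\mathcal{R}^+$. For every pair $u,v$ one has $w'(uv)=2M-w(uv)$, so the map $x\mapsto 2M-x$ sends each weight sum in $G^w$ to the corresponding weight sum in $G^{w'}$.

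First I would record what this reflection does to the intervals. Since $x\mapsto 2M-x$ is an order-reversing bijection, the image of $I_i=[a_i,b_i]$ is the interval $[2M-b_i,\,2M-a_i]$, and the images of $I_1,\ldots,I_k$ appear in reversed order along the line. Relabeling them increasingly, I set $I'_j=[2M-b_{k+1-j},\,2M-a_{k+1-j}]$ for $1\le j\le k$; these are again $k$ pairwise disjoint, increasingly ordered intervals. Because $w(uv)\in\bigcup_i I_i$ if and only if $w'(uv)=2M-w(uv)\in\bigcup_j I'_j$, the weighted graph $G^{w'}$ together with the intervals $I'_j$ realizes exactly the same edge set as $G$, so $G^{w'}$ is a $k$-witness of $G$ (it must genuinely use all $k$ intervals, since $\gamma(G)=k$).

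The key point is then to verify that the free property flips. The lower endpoint of the last reflected interval is $a'_k=2M-b_1$, where $b_1$ is the right endpoint of the first interval of $G^w$. For any non-edge $e$, left-freeness of $G^w$ gives $w(e)>b_1$, hence $w'(e)=2M-w(e)<2M-b_1=a'_k$. Thus every non-edge of $G^{w'}$ has weight strictly below $a'_k$, which is precisely the right-free condition. This establishes that a left-free $k$-witness yields a right-free one.

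For the converse I would simply apply the same reflection to a right-free witness: the argument is completely symmetric, with the roles of $a_k$ and $b_1$ interchanged, so a right-free $k$-witness produces a left-free one. The only genuine care needed is the bookkeeping of the order reversal of the intervals (ensuring that $b_1$ maps to the left endpoint $a'_k$ of the new last interval) and the choice of $M$ large enough to keep all weights positive; both are routine, so I do not expect a substantive obstacle beyond this.
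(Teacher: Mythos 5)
Your proposal is correct and uses essentially the same idea as the paper: a reflection $x\mapsto 2M-x$ of the weights and intervals, with the intervals relabeled in reversed order, so that the left-free condition (non-edges above $b_1$) turns into the right-free condition (non-edges below the left endpoint $2M-b_1$ of the new last interval). The paper performs the identical construction, except that it reflects about $b_k$ itself (setting $w'(u)=b_k-w(u)$, $I'_{k+1-i}=[2b_k-b_i,2b_k-a_i]$), starting from the right-free side where right-freeness forces every weight below $b_k$; that choice automatically makes every reflected endpoint at least $b_k>0$. One small detail you should patch: requiring only $M>\max_{v\in V}w(v)$ guarantees positive weights but not non-negative interval endpoints, since nothing prevents $b_k>2M$ (the top interval may extend far beyond any achievable pair sum), in which case $I'_1=[2M-b_k,\,2M-a_k]$ has a negative left endpoint. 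This is cosmetic — either also require $2M>b_k$ when choosing $M$, or truncate $I'_1$ at $0$ (harmless, as all pair sums are positive) — but the paper's own proof explicitly checks that all new weights and intervals are positive, so the point deserves a sentence.
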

\begin{proof}
Let $G=(V,E)$ be a graph  with star number $k$  and  let $G^w$ be a $k$-witness that is right-free with  $I_i=[a_i, b_i]$ for $1\leq i \leq k$ the corresponding intervals. If $G^w$ is also right-free, the proof follows. Otherwise notice that for all pairs of vertices $u,v$ (that not necessarily correspond to edges) we have $w(uv)\leq b_k$ and thus for any $u\in V$, $w(u) \leq b_k$. We define now a left-free $G^{w'}$ as follows. For all $u \in V$ we set $w'(u)=b_k-w(u)$ and for any $1\leq i \leq k$ we define  $I'_{k+1-i}=[2b_k-b_i,2b_k-a_i]$. Notice that all the new weights and intervals are positive. Moreover, the new intervals $I'_i$ are ordered in increasing order. Now observe that for any edge $uv \in E$ there exists an $i$ such that $a_i\leq w(uv) \leq b_i$. Then $2b_k-b_i\leq w'(uv)=2b_k-w(uv)\leq 2b_k-a_i$ and thus $w'(uv)\in I'_i$. Next, for any $uv \not \in E$, as $G^w$ is right-free we have that either $w(uv)< a_1$ or there exists and $1\leq i \leq k-1$ for which $b_i< w(uv)< a_{i+1}$. Then we have $w'(uv)=2b_k-w(uv)$ and thus either $w'(uv)> 2b_k-a_1$ or $2b_k-a_i< w'(uv)< 2b_k-b_{i+1}$. Thus,  $w'(uv) \not \in \cup_i I'_i$. Clearly, this proves that $G^{w'}$ is a left-free $k$-witness. 

Similarly, we can show that if $G^w$ is left-free, then it is possible to construct a $G^{w'}$ that is right-free.
\end{proof}

Based on Lemma~\ref{lem:left-right-unbalanced} from now on we will simply use \emph{free witness} without specifically mentioning 'left' or 'right'. The next result shows that there exist graphs $G$ that do not have a free $\gamma(G)$-witness.

\begin{theorem}\label{theo_path_balanced}
Let $G$ be a graph with $\gamma(G)=1$ that contains $P_4$ as an induced subgraph. Then every $1$-witness of $G$ is not free.
\end{theorem}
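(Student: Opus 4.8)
The plan is to fix an arbitrary $1$-witness $G^w$ with interval $I_1=[a_1,b_1]$ and show directly that it can be neither left-free nor right-free. Label the vertices of the induced $P_4$ as $v_1,v_2,v_3,v_4$ in path order and write $w_i=w(v_i)$. Since the $P_4$ is \emph{induced}, the only edges among these four vertices are $v_1v_2,v_2v_3,v_3v_4$ and the only non-edges are $v_1v_3,v_1v_4,v_2v_4$; crucially, being non-edges of the induced subgraph, they are genuine non-edges of $G$. The witness conditions then give $w_1+w_2,\ w_2+w_3,\ w_3+w_4\in[a_1,b_1]$, while each of $w_1+w_3,\ w_1+w_4,\ w_2+w_4$ lies outside $[a_1,b_1]$.

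The key observation I would isolate is the identity
\[
(w_1+w_3)+(w_2+w_4)=(w_1+w_2)+(w_3+w_4),
\]
which pairs the two ``diagonal'' non-edges $v_1v_3,v_2v_4$ against the two ``end'' edges $v_1v_2,v_3v_4$, both pairs having total weight $w_1+w_2+w_3+w_4$. This single identity drives both halves of the argument, and the middle edge $v_2v_3$ and the non-edge $v_1v_4$ are never actually needed.

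To rule out left-freeness, suppose for contradiction that every non-edge of $G$ has weight exceeding $b_1$; in particular $w_1+w_3>b_1$ and $w_2+w_4>b_1$, so their sum exceeds $2b_1$. But the two end edges satisfy $w_1+w_2\le b_1$ and $w_3+w_4\le b_1$, so their sum is at most $2b_1$, and by the identity this equals the diagonal sum -- a contradiction. The right-free case is the mirror image: assuming every non-edge has weight below $a_1$ forces the diagonal sum below $2a_1$, while the two end edges force the equal quantity to be at least $2a_1$, again impossible. Hence at least one diagonal non-edge lies below $a_1$ (blocking left-freeness) and at least one lies above $b_1$ (blocking right-freeness), so $G^w$ is not free.

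Since $G^w$ was an arbitrary $1$-witness, the conclusion follows. I expect no serious obstacle: the computations are trivial once the diagonal-versus-end identity is noticed. The only point requiring care is conceptual rather than technical -- keeping straight that ``free'' is a global condition on all non-edges of $G$, so that it suffices to exhibit a single offending non-edge on each side, and remembering that the induced nature of the $P_4$ is exactly what certifies the diagonal pairs as bona fide non-edges of $G$.
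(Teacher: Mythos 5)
Your proof is correct, but it follows a genuinely different route from the paper's. The paper first reduces to showing that $P_4$ itself admits no free $1$-witness (observing that a free witness of $G$ would restrict to a free witness of the induced $P_4$), then fixes \emph{w.l.o.g.} $w(v_1)<w(v_4)$ and compares each diagonal non-edge with an edge sharing a vertex, concluding that one non-edge weight must fall below $a_1$ and another above $b_1$. You dispense with both the explicit reduction and the ordering: the identity $(w_1+w_3)+(w_2+w_4)=(w_1+w_2)+(w_3+w_4)$ yields the two contradictions symmetrically. Your approach buys cleanliness -- no w.l.o.g., no case analysis, and both freeness directions handled by a single mechanism; it also sidesteps a slip in the paper's write-up, where the claimed inequality $w(v_1v_2)<w(v_1v_4)$ actually requires $w(v_2)<w(v_4)$ rather than the assumed $w(v_1)<w(v_4)$ (the paper's argument is repaired by using the non-edge $v_2v_4$ in place of $v_1v_4$, exactly one of the two diagonals your identity exploits). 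The paper's approach buys slightly sharper information -- under its ordering it pins down which specific non-edge lies on which side of the interval -- and its reduction step isolates the reusable fact that freeness is inherited by induced subgraphs. Both proofs rest, as you rightly emphasize, on the $P_4$ being induced, so that its diagonals are genuine non-edges of $G$.
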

\begin{proof}
 To prove the claim, it is enough to prove that $P_4$ has no free $1$-witness. Indeed, if we assume on the contrary that there exists a free $1$-witness ${P_n}^w$, then as $\gamma(G)=\gamma(P_4)=1$ it is not difficult to see that a subgraph of ${G}^w$ would serve as a free $1$-witness for $P_4$.  Therefore, we proceed to prove that $P_4$ has no free $1$-witness. Let $P_4=v_1, v_2, v_3, v_4$ where $v_i v_{i+1} \in E(P_4)$  with $1\leq i \leq 3$.  Consider any $1$-witness $P_4^w$ with corresponding interval $I_1=[a_1,b_1]$. \emph{W.l.o.g.} assume $w(v_1)< w(v_4)$. Then we have $a_1 \leq w(v_1v_2) < w(v_1v4)$ and thus necessarily it must hold $w(v_1v_4) > b_1$. Moreover $b_1 \geq w(v_3v_4) > w(v_3v1)$ and we must have $w(v_3v1)< a_1$. Thus,  $P_4^w$ cannot be free. 
\end{proof}

In \cite{Xiao2020,monti2024starkpcgs} it is shown that $\gamma(P_n)=1$  and thus we have the following
\begin{corollary}
Every $1$-witness of $P_n$, with $n\geq 4$, is not free. 
\end{corollary}\label{corollary_path_balanced}

We  define a normal form for the $k$-witness graphs. 

\begin{definition}\label{def:normal_form}
Let $G=(V,E)$ be a graph with $\gamma(G)=k$. A $k$-witness $G^w$ is said to be in \emph{normal form} if for any vertex $x\in V$, both of the followings are true
 \begin{itemize}
     \item[(i)]  for any vertex $u\neq x$ in $V$, it holds $w(x)\neq w(u)$. 
\item[(ii)] for any two distinct vertices $u_1$ and $u_2$ in $V$ it holds $2 w(x)\neq w(u_1) + w(u_2)$
 \end{itemize}
\end{definition}

The next lemma shows that every graph $G$ has a $\gamma(G)$-witness in normal form.

\begin{lemma}\label{lem:distinct_weight}
 Let $G=(V,E)$ be a graph with $\gamma(G)=k$, then there exists a $k$-witness $G^w$ in normal form.
 
\end{lemma}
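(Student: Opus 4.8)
The plan is to start from an arbitrary $k$-witness $G^w$ (which exists since $\gamma(G)=k$) and obtain the normal form by a small \emph{generic perturbation} of the vertex weights, leaving the witness property intact. The two normal-form requirements (i) and (ii) are both of the shape ``a certain linear expression in the weights is nonzero,'' so the intuition is that each of them can be destroyed by an arbitrarily tiny movement of the weights, and there are only finitely many such expressions to avoid. The only thing one must guarantee is that a movement small enough to reach a generic point does not alter which pairs are edges.

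To make the perturbation ``safe'', I would first create a uniform margin around the intervals. Let $S_E=\{w(e):e\in E\}$ and $S_N=\{w(e):e\notin E\}$; both are finite, $S_E\subseteq\bigcup_i I_i$, and (since the intervals are closed) every value of $S_N$ lies at some strictly positive distance $d>0$ from $\bigcup_i I_i$. Choosing $\eta>0$ smaller than $d$ and smaller than half of $\min_i(a_{i+1}-b_i)$, I replace each $I_i=[a_i,b_i]$ by the enlarged interval $I'_i=[a_i-\eta,\,b_i+\eta]$. These remain $k$ disjoint increasingly-ordered intervals, they still contain every value of $S_E$, and they still avoid every value of $S_N$. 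Moreover there is now a common margin: setting $\mu=\min(\eta,\,d-\eta)>0$, every edge-sum lies at distance at least $\mu$ from the complement of $\bigcup_i I'_i$, and every non-edge-sum lies at distance at least $\mu$ from $\bigcup_i I'_i$. Consequently, for \emph{any} weights $w'(v)=w(v)+\delta_v$ with $|\delta_v|<\mu/2$ (so that each pair-sum moves by less than $\mu$), the graph $G^{w'}$ with intervals $I'_1,\dots,I'_k$ is again a $k$-witness of $G$; for small $\eta$ the weights also stay positive.

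It then remains to pick the perturbation $\delta=(\delta_v)_{v\in V}$ inside the open box $\{|\delta_v|<\mu/2\}$ so that (i) and (ii) hold. Condition (i) fails for a pair $x\neq u$ exactly when $\delta_x-\delta_u=w(u)-w(x)$, and condition (ii) fails for $x$ and distinct $u_1,u_2$ exactly when $2\delta_x-\delta_{u_1}-\delta_{u_2}=w(u_1)+w(u_2)-2w(x)$. In every instance the left-hand side is a nonzero linear form in the variables $\delta_v$ (this is clear when $x,u_1,u_2$ are distinct, and when $x\in\{u_1,u_2\}$ the form collapses to a nonzero difference $\delta_x-\delta_{u_j}$), so each bad locus is a hyperplane in the $\delta$-space. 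There are only finitely many such hyperplanes, hence their union has Lebesgue measure zero and cannot cover the open box. Selecting any $\delta$ in the box avoiding all of them yields a $k$-witness $G^{w'}$ in normal form.

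The routine genericity step at the end is not where the difficulty lies; the point requiring the most care is the interval-enlargement argument, namely verifying that one can simultaneously push all edge-sums strictly inside the intervals and keep all non-edge-sums strictly outside, with a single uniform margin $\mu$, without merging intervals or capturing a non-edge. Once this margin is secured, the perturbation is decoupled from the edge relation and the measure-theoretic conclusion is immediate.
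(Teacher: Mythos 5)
Your proof is correct, but it takes a genuinely different route from the paper's, even though both rest on the same underlying fact: sufficiently small weight perturbations cannot change the edge relation once the intervals are slightly enlarged. The paper exploits this fact \emph{iteratively}: it selects one vertex $x$ violating (i) or (ii), increases only $w(x)$ by an explicit $\epsilon=\tfrac14\min\{\delta_1,\delta_2,\delta_3\}$, enlarges each $I_i=[a_i,b_i]$ to $[a_i,b_i+\epsilon]$, verifies that $x$ becomes compliant while every previously compliant vertex remains so, and concludes by induction on the number of violating vertices. You instead perform a single symmetric enlargement of the intervals to secure a uniform margin $\mu$, which decouples the witness property from any simultaneous perturbation of the weights by less than $\mu/2$, and then dispose of (i) and (ii) in one stroke by genericity: the bad perturbations lie on finitely many affine hyperplanes, which cannot cover an open box. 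Your route buys conceptual economy (no bookkeeping about monotone progress, no per-vertex case analysis) and robustness, since any finite list of non-degenerate linear constraints on the weights could be avoided the same way; it also preserves left/right-freeness (a right-free witness has every non-edge sum at least $d$ below $a_k$, so after perturbation it stays below $a_k-\eta$, the left endpoint of the enlarged $I_k$), so Corollary~\ref{cor:normal_form_free}, which the paper derives from its proof of this lemma, survives under your argument as well. The paper's route buys full constructiveness: every quantity is an explicit minimum over finitely many numbers, and no measure-theoretic or dimension argument is needed. Two trivial patches you should add: when $G$ is complete the set $S_N$ is empty and $d$ must be set by fiat (the paper dispatches complete and empty graphs separately at the outset), and the enlarged interval $[a_1-\eta,\,b_1+\eta]$ should be truncated at $0$ so that the intervals remain intervals of non-negative reals.
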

\begin{proof} 
Notice that in the extreme case where the graph $G$ is complete (or empty), the claim trivially holds as it is always possible to  assign different weights to the vertices and set the single interval $I$ appropriately in such a way both (i) and (ii) are satisfied. 
So we consider now graphs that have at least one edge and one non-edge. Let $G=(V,E)$ be a graph with star number $k$ and let $G^w$ a $k$-witness graph and $I_i=[a_i, b_i]$ for $1\leq i \leq k$ the corresponding intervals. Let $x$ be a vertex in $V$ for which at least one among (i) and (ii) does not hold. If no such vertex $x$ exists, we are done. However, if $x$ does exist, we adjust the weight of $x$ and the intervals $I_i$ to ensure that, within this updated $k$-witness configuration, $x$ now meets both conditions (i) and (ii). Furthermore,  vertices which already met these conditions continue to do so. By repeating this process, we consistently reduce the number of vertices that violate the conditions in (i) and (ii), and thus the proof will follow. We define the following three quantities:
$$
 \delta_1= \min_{\substack{xy \notin E \\ 1 \leq i \leq k}} \left\{ \left| w(xy) - b_i \right|, \left| w(xy) - a_i \right| \right\} >0
$$
Notice that for every non-edge $xy$, its weight is at least $\delta_1$ apart from the endpoints of each of the intervals $I_i$. Moreover, $\delta_1 >0$ as there is at least one non edge in $G$. 
$$
 \delta_2= \min_{\substack{y \in V \\ w(y) \neq w(x)}} \left\{  \left| w(x) - w(y)\right| \right\} >0
$$
Notice that all weights different from $w(x)$ differ from it  by at least $\delta_2$. This will be useful in proving that condition (i) is met for $x$. Moreover,  $\delta_2 >0$ as the only case where all the vertices have equal weight is where the graph is complete or empty. 
$$
 \delta_3= \min_{\substack{u_1, u_2, u_3 \in V\\ x\in \{u_1,u_2,u_3\}; 2w(u_1) \neq w(u_2u_3)}} \left\{  \left| 2w(u_1) - w(u_2u_3))\right| \right\}
$$
The way we defined $\delta_3$ will help proving that condition (ii) is met for $x$.
Finally, we set $\epsilon=\frac{1}{4} \min \{\delta_1,\delta_2, \delta_3\}$ if $\delta_3 > 0$, otherwise $\epsilon=\frac{1}{4} \min \{\delta_1,\delta_2\}$. 

Then define $w': V \rightarrow \mathcal{R}^+$ as follows
$$
   w'(v) = 
    \begin{cases} 
     w(v)+\epsilon  & \text{if } v =x, \\
      w(v) & \text{otherwise } 
    \end{cases}
$$
For $1 \leq i \leq k$ we set $I_i'=[a_i, b_i+\epsilon]$. Notice that as $\epsilon< \delta_1$, from Claim~\ref{claim:non_edge_between_intervals}, we have that the intervals $I'_i$ do not overlap. Moreover, by construction it is not difficult to see that  $x$ satisfies both condition (i) and (ii).  In addition, as we changed only the weight of $x$ then all the vertices $y$ that satisfied both conditions (1) and (2) still continue to satisfy them.

It remains to prove that $G^{w'}$ is a $k$-witness with respect to the intervals $I_1', \ldots, I_k'$. Given two vertices $u$ and $v$ there are two cases to consider: 

\begin{itemize}
    \item $u\neq x$ and $v\neq x$. Then as $\epsilon< \delta_1$ it is easy to see that $ w(uv) \in \cup_i I_i$ if and only if $ w'(uv) \in \cup_i I_i'$.
\item $u=x$. Then we have $w'(xv)=w(xv)+\epsilon$. Clearly, if $ w(xv) \in I_i$ for some $1\leq i \leq k$, then $a_i \leq w(xv) \leq b_i$. Thus we have $a_i \leq w'(xv)=w(xv) +\epsilon \leq b_i+\epsilon$. Thus $w'(xv)\in I'_i$. 
Otherwise if $ w(xv) \not\in \cup_i I_i$ since $\epsilon < \delta_1$ we have that $ w'(xv)=w(xv)+\epsilon \not\in \cup_i I_i$. 
\end{itemize} 
This concludes the proof.
\end{proof}

It is not difficult to see that the proof of Lemma~\ref{lem:distinct_weight} preserves the free property of a witness, that is, if the graph $G^w$ is a left-free (or right-free) $k$-witness then $G^{w'}$ is a left-free (or right-free) $k$-witness in normal form. Thus we have the following.

\begin{corollary}\label{cor:normal_form_free}
For any graph $G$ with $\gamma(G)=k$, if there exists a left-free (or right-free) $k$-witness, then there also exists a left-free (or right-free) $k$-witness in normal form.
\end{corollary}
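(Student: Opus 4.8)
The plan is to verify that the construction in the proof of Lemma~\ref{lem:distinct_weight} never destroys the free property, and therefore the normal-form witness it produces inherits whatever free property the starting witness had. Concretely, I would start from a left-free $k$-witness $G^w$ (the right-free case being symmetric by Lemma~\ref{lem:left-right-unbalanced}, or equally by repeating the argument verbatim), and re-examine the single perturbation step of that proof, in which the weight of a chosen vertex $x$ is increased by $\epsilon>0$ and each interval $I_i=[a_i,b_i]$ is replaced by $I_i'=[a_i,b_i+\epsilon]$.

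First I would recall that left-free means every non-edge $e$ satisfies $w(e)>b_1$, i.e.\ no non-edge weight falls at or below the right endpoint of the first interval. I must show the perturbed witness $G^{w'}$ is again left-free, that is, every non-edge $e$ satisfies $w'(e)>b_1+\epsilon=b_1'$, where $b_1'$ is the new right endpoint of $I_1'$. There are two kinds of non-edges to check. For a non-edge $e=uv$ with neither endpoint equal to $x$, the weight is unchanged, $w'(e)=w(e)$; since $\epsilon<\delta_1$ and $w(e)>b_1$ with $w(e)$ at distance at least $\delta_1$ from $b_1$, we get $w'(e)=w(e)\ge b_1+\delta_1>b_1+\epsilon=b_1'$. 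For a non-edge $e=xv$ incident to $x$, we have $w'(e)=w(e)+\epsilon>b_1+\epsilon=b_1'$ directly, using $w(e)>b_1$ from left-freeness. In both cases every non-edge weight stays strictly above $b_1'$, so $G^{w'}$ is left-free.

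Then I would simply invoke the iteration already set up in Lemma~\ref{lem:distinct_weight}: the perturbation is applied repeatedly, once per offending vertex, strictly decreasing the number of vertices that violate conditions (i) or (ii), and the above shows left-freeness is an invariant of each step. Hence the final witness is simultaneously in normal form and left-free, which is exactly the claim. The right-free case follows identically, replacing $b_1$ by $a_k$ and the inequalities by their mirror images, or alternatively by applying Lemma~\ref{lem:left-right-unbalanced} to pass to a right-free witness, normalizing it as above (left-free version), and passing back.

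I do not expect a genuine obstacle here, since the whole point is that the construction only ever shifts weights and interval endpoints \emph{to the right} by the same tiny amount $\epsilon<\delta_1$, so the separation of non-edge weights from interval endpoints is preserved and no non-edge can migrate below $b_1$ (resp.\ above $a_k$). The only point requiring a moment of care is confirming that the \emph{left} endpoints $a_i$ are left untouched by the construction, so that the lower bound $b_1$ governing left-freeness moves in a controlled way and the first interval cannot swallow a previously excluded non-edge; this is immediate from $I_i'=[a_i,b_i+\epsilon]$. Thus the corollary is a direct bookkeeping consequence of the lemma's proof.
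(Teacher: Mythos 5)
Your proposal is correct and takes essentially the same approach as the paper: the paper justifies this corollary simply by remarking that the construction in the proof of Lemma~\ref{lem:distinct_weight} preserves the left-free (or right-free) property, and your case analysis (non-edges incident to $x$ versus not, with the observation that intervals only extend rightward by $\epsilon<\delta_1$) is exactly the verification being invoked.
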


\section{Graph operations on star-$k$-PCGs}\label{sec:operations}
In this section we consider how the star number of a graph changes after we perform one of the following operations:
\begin{enumerate}
    \item[(a)] Adding an isolated vertex;
    \item[(b)] Adding a universal vertex;
    \item[(c)] Adding a pendant vertex;
    \item[(d)] Adding a vertex that is a false twin for the old vertex \(v\);
    \item[(e)] Adding a vertex that is a true twin for the old vertex \(v\);
    \item[(f)] Graph complement;
\end{enumerate}

\begin{theorem}
Given a graph $G$ with $\gamma(G) = k$, let $G'$ be the graph obtained from $G$ by adding an isolated vertex. It holds that $\gamma(G') = k$.
\end{theorem}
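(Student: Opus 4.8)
The plan is to prove the two inequalities $\gamma(G') \ge k$ and $\gamma(G') \le k$ separately, the first coming from monotonicity under vertex deletion and the second from an explicit construction.

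For the lower bound I would observe that $G$ is exactly the graph obtained from $G'$ by deleting the newly added isolated vertex, so $G$ is an induced subgraph of $G'$. If $G'$ is a star-$t$-PCG witnessed by a weight function $w'$ and intervals $I_1,\dots,I_t$, then restricting $w'$ to $V(G)$ while keeping the same intervals yields a star-$t$-witness of $G$: for any two vertices $u,v \in V(G)$ the adjacency condition $w'(u)+w'(v) \in \bigcup_i I_i$ is untouched by the deletion, so edges and non-edges of $G$ are reproduced correctly. Hence $\gamma(G) \le \gamma(G')$, which gives $\gamma(G') \ge k$.

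For the upper bound I would exhibit an explicit $k$-witness of $G'$. Starting from a $k$-witness $G^w$ of $G$ with ordered intervals $I_i = [a_i,b_i]$ for $1 \le i \le k$, let $z$ denote the new isolated vertex and extend the weight function by setting $w(z) = b_k + 1$, leaving all other weights and all intervals unchanged. Every pair of vertices inside $V(G)$ keeps the same weight sum and hence the same membership (or non-membership) in $\bigcup_i I_i$, so all edges and non-edges of $G$ are preserved. For the new vertex, each sum satisfies $w(z)+w(v) = b_k + 1 + w(v) > b_k \ge b_i$ for every $i$ and every $v \in V(G)$, because weights are positive; thus no sum involving $z$ lies in any interval, and $z$ is correctly isolated. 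This shows $G'$ is a star-$k$-PCG, so $\gamma(G') \le k$.

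Combining the two bounds yields $\gamma(G') = k$. I do not expect a genuine obstacle here: the only point requiring care is the placement of $z$, and the key idea is to give it a weight so large that every sum involving $z$ exceeds $b_k$, the right endpoint of the last interval, and therefore falls outside $\bigcup_i I_i$, while still being a positive real. Pushing $z$ to the right of all intervals achieves both requirements simultaneously, whereas assigning $z$ a small weight would not in general keep the sums out of the intervals.
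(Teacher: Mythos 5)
Your proposal is correct and matches the paper's proof essentially exactly: the paper uses the very same construction, assigning weight $b_k+1$ to the new isolated vertex so that every sum involving it exceeds the last interval. The only difference is that you spell out the lower bound $\gamma(G')\geq k$ via the induced-subgraph restriction argument, which the paper leaves implicit; that extra care is fine and arguably an improvement in rigor.
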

\begin{proof}
Let $G=(V,E)$ be a graph with star number $k$ and let $G^w$ be a $k$-witness graph with $I_i=[a_i, b_i]$ for $1\leq i \leq k$. Now consider the graph $G'$ obtained from $G$ by adding an isolated vertex $v$. We construct the $k$-witness $G'^{w'}$ from $G^w$ by assigning the weight $b_k+1$ to the new vertex $v$. It is straightforward to see that this is a $k$-witness for $G'$.
\end{proof}

\begin{figure*}
    \begin{subfigure}[b]{0.55\textwidth}
    \centering
    \includegraphics[width=\linewidth]{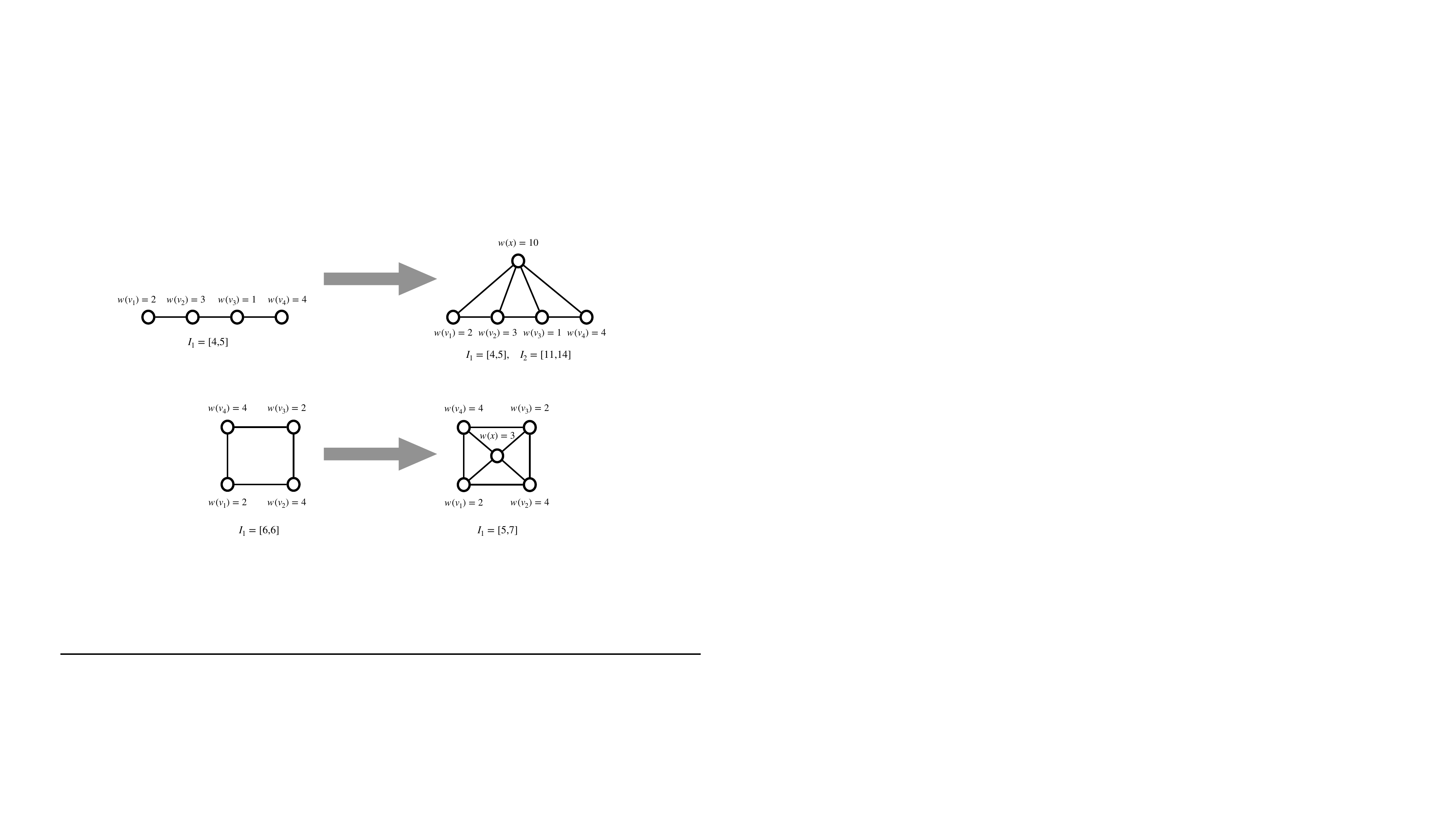}
    \caption{\label{fig:universal_a}}
    \end{subfigure}
    \qquad
    \begin{subfigure}[b]{0.37\textwidth}
    \centering
    \includegraphics[width=\linewidth]{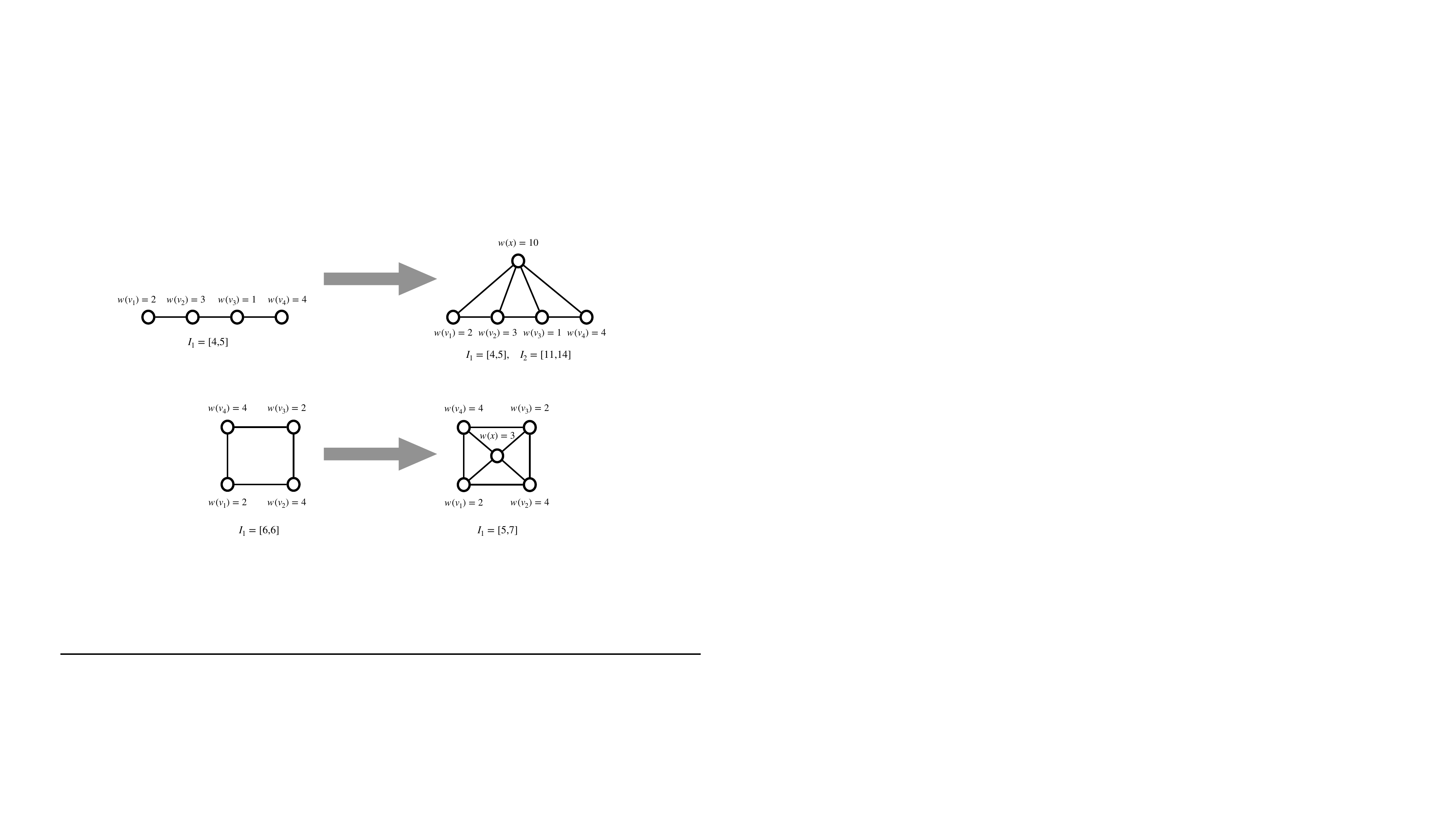}
    \caption{\label{fig:universal_b}}
    \end{subfigure}    
\caption{(\subref{fig:universal_a}) An example of a star-$1$-PCG for which no $1$-witness is free and where the addition of a universal vertex necessarily increases the number of intervals (see the graph $G_{27}$ in \cite{monti2024starkpcgs}) and (\subref{fig:universal_b}) an example of a star-$1$-PCG for which no  $1$-witness is free and where the addition of a universal vertex does not increase the number of intervals.}\label{fig:universal}
\end{figure*}

\begin{theorem}
Given a graph $G$ with $\gamma(G) = k$, let $G'$ be the graph obtained from $G$ by adding a universal vertex. If there exists a $G^w$ that is free then $\gamma(G') = k$, otherwise $\gamma(G') \leq  k+1$. 
\end{theorem}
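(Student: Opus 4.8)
The plan is to prove both parts by explicit construction of a witness for $G'$, treating the two cases separately according to whether a free $k$-witness of $G$ exists.

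\textbf{The free case.} Suppose $G^w$ is a free $k$-witness of $G$; by Lemma~\ref{lem:left-right-unbalanced} and Corollary~\ref{cor:normal_form_free} I may assume it is right-free and in normal form, so that every non-edge $e$ of $G$ satisfies $w(e) < a_k$. The key observation is that a right-free witness leaves ``room'' above the top interval that I can exploit to attach a universal vertex. I would add the new vertex $v$ and assign it a large weight $w(v) = W$, where $W$ is chosen so that for every old vertex $u$ the sum $w(u) + W$ lands inside a single, possibly enlarged, topmost interval, while no old non-edge sum is disturbed. Concretely, I would extend $I_k = [a_k, b_k]$ to $[a_k, W + w_{\max}]$ where $w_{\max} = \max_u w(u)$, and pick $W$ large enough that the smallest new sum $W + w_{\min}$ still exceeds $a_k$. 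Because the witness is right-free, every old non-edge had sum strictly below $a_k$, hence strictly below every new sum $w(u)+W$, so enlarging the top interval upward cannot accidentally turn an old non-edge into an edge. Every new pair $uv$ is an edge (since $v$ is universal) and lands in the enlarged $I_k$, as required. This shows $\gamma(G') \le k$, and since $G$ is an induced subgraph of $G'$ we get $\gamma(G') \ge \gamma(G) = k$, hence equality.

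\textbf{The general case.} When no free $k$-witness exists, I start from an arbitrary $k$-witness $G^w$ (in normal form, via Lemma~\ref{lem:distinct_weight}) and simply spend one extra interval to accommodate the universal vertex. I would again give $v$ a weight $W$ strictly larger than $b_k + w_{\max}$, so that every new sum $w(u) + W$ exceeds $b_k$ and therefore lies above all $k$ original intervals. I then introduce a brand-new interval $I_{k+1} = [\,W + w_{\min},\, W + w_{\max}\,]$ containing exactly the new sums. Since all old non-edges and old edges keep their original weights, the first $k$ intervals continue to classify the old pairs correctly, and $I_{k+1}$ captures precisely the $n$ new universal edges while containing no old sum. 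This yields a $(k+1)$-witness and hence $\gamma(G') \le k+1$.

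\textbf{The main obstacle.} The delicate point is entirely in the free case: I must verify that enlarging the top interval upward does not misclassify any old \emph{non-edge} whose weight happened to lie just below $a_k$. This is exactly where right-freeness is indispensable, since it guarantees all old non-edge sums are below $a_k$, and the enlargement only adds values at the top. A secondary subtlety is ensuring the chosen $W$ keeps all new weights and interval endpoints positive and keeps the intervals mutually disjoint and correctly ordered; this is routine once $W$ is taken sufficiently large, but it should be stated. I would also remark that Figure~\ref{fig:universal} shows the bound $k+1$ in the non-free case can be tight (as with $G_{27}$), confirming that the case distinction is genuinely necessary and not an artifact of the proof.
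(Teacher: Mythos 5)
Your overall strategy coincides with the paper's: split on whether a free $k$-witness exists, use right-freeness to absorb the universal vertex into an upward-enlarged top interval, and otherwise spend one extra interval on the new edges. Your free case is correct and essentially identical to the paper's construction (the paper takes the new weight to be $b_k+1$ and enlarges the top interval to $[a_k,2b_k+1]$), and your explicit appeal to induced-subgraph monotonicity for the lower bound $\gamma(G')\geq k$ is fine.

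The non-free case, however, has a genuine gap. Your claim that $I_{k+1}=[W+w_{\min},\,W+w_{\max}]$ ``contains no old sum'' does not follow from the choice $W>b_k+w_{\max}$: old \emph{non-edge} sums are not bounded by $b_k$ (they are only guaranteed to lie outside the intervals), and nothing bounds $w_{\max}$ in terms of $b_k$, since a witness may assign an arbitrarily large weight to an isolated vertex of $G$ (any vertex of weight exceeding $b_k$ is automatically isolated, and normal form does not exclude this). Concretely, let $G$ be $P_4$ together with two isolated vertices $z_1,z_2$; it has $\gamma(G)=1$, it has no free $1$-witness (a free witness would restrict to a free $1$-witness of $P_4$, contradicting Theorem~\ref{theo_path_balanced}), and it admits the normal-form $1$-witness with interval $[5.1,6.3]$, path weights $1,\,4.3,\,2,\,3.1$, and $w(z_1)=100$, $w(z_2)=90$. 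Here $w_{\max}=100$, so you may take $W=106.4$, giving $I_2=[107.4,\,206.4]$, which contains the non-edge sum $w(z_1)+w(z_2)=190$; your construction thus misclassifies $z_1z_2$ as an edge. The repair is one line --- choose $W$ above every old pair sum, e.g.\ $W>\max\{b_k,\,2w_{\max}\}$ --- but note how the paper sidesteps the issue structurally: it calibrates the new vertex's weight against $M=\max_{e\notin E}w(e)$, the maximum non-edge sum, which is by definition exactly the quantity that must stay below the new interval. As written, your verification of the bound $\gamma(G')\leq k+1$ fails on such witnesses, even though your choice does happen to suffice whenever $G$ has no isolated vertices.
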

\begin{proof}
Let $G=(V,E)$ be a graph with star number $k$ and let $G^w$ be a $k$-witness graph with $I_i=[a_i, b_i]$ for $1\leq i \leq k$ and let $G'=(V\cup \{x\},E')$ where $E'=E \cup \{xv: v\in V\}$. First assume there exists $G^w$ that is free. By Lemma~\ref{lem:left-right-unbalanced} we can assume \textit{w.l.o.g.} that $G^w$ is right-free and thus for any non-edge $uv$ it holds $w(uv) < a_k\leq b_k$. This clearly means that for all $u\in V$, $w(u)< b_k$. Then we define the graph $G'^{w'}$ from $G^w$ by extending the weight $w'$ from $w$  as follows
$$
   w'(u) = 
    \begin{cases} 
    b_k +1  & \text{if } u =x, \\
      w(u) & \text{otherwise } 
    \end{cases}
$$
For $1\leq i \leq  k-1$ we set $I'_i=I_i$  and $I'_k=[a_k, 2b_k+1]$.  

For any $uv \in E$ we have that there exists an $i$ such that $w(uv) \in I_i$ and we have $w'(uv)=w(uv) \in I_i \subseteq I'_i$. Next, for any edge $xu$ where $u \in V$, we have $a_k < b_k+1 \leq w'(u)+w'(x)\leq b_k+b_k+1=2b_k+1$ and thus $w'(xu) \in I'_k$.  Finally, for any non-edge $uv \not\in E$, we have that  and thus $w'(uv)=w(uv) \not\in \cup_{1\leq i \leq k-1} I_i = \cup_{1\leq i \leq k-1} I'_i$. Moreover as $G^w$ is right-free  we have $w'(uv)=w(uv) < a_k \not \in I'_k$.  Thus, $G'^{w'}$ is a $k$-witness for $G'$.

Now, assume all the $k$-witnesses of $G$ are not free. In that case let 
$M=\max_{e \not \in E} w(e)$ and we define the weight $w'$ from $w$  as follows
$$
   w'(u) = 
    \begin{cases} 
    M+1  & \text{if } u =x, \\
      w(u) & \text{otherwise } 
    \end{cases}
$$ 

For all $1\leq i \leq  k$ we set $I'_i=I_i$  and $I'_{k+1}=[M, 2M]$. Clearly for any $uv \in E$ as the weights of the vertices and the $k$ intervals  $I_1, \ldots, I_k$ remain the same, it continues to hold that there exists an $1\leq i\leq k$ such that $w(uv) \in I_i$.

Next, for any edge $xu$ where $u \in V$, we have $M \leq w'(xu)=w(u)+M+1\leq 2M +1$ where the  last inequality follows by the fact that $w(u) \leq M$.  Hence, $w'(xu) \in I'_1$. Finally, for any non-edge $uv \not\in E$, we have that first $w'(uv)=w(uv) \not \in \cup_{i=1}^k I_k$  and  $w(uv) < M+1$ and thus $w'(uv) \not \in \cup_{1\leq i\leq k+1} I'_i$. 
This concludes the proof.
\end{proof}

In Figure~\ref{fig:universal_a} we show a path $P_4$, which has no free $1$-witnesses (see Theorem~\ref{theo_path_balanced}), and where the addition of a universal vertex increases the number of intervals. However, in Figure~\ref{fig:universal_b} we have another star-$1$-PCG which  has no free $1$-witnesses (the proof is almost identical to the one in Theorem~\ref{theo_path_balanced}) where  the addition of a universal vertex does not increase the number of intervals.

\begin{theorem}\label{theo:pendant}
Given a graph $G=(V,E)$ with $\gamma(G) = k$, let $X$ be a set of vertices for which $V\cap X =\emptyset$ and let $G'$ be the graph obtained from $G$ by adding the vertices of $X$ as pendant. If there exists a $G^w$ that is free then $\gamma(G') = k$, otherwise $\gamma(G') \leq  k+1$. 
\end{theorem}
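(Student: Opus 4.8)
The plan is to separate the easy lower bound from two upper-bound constructions, handling the free and the non-free cases by different weight assignments. For the lower bound, note that $G$ is an induced subgraph of $G'$: restricting any witness of $G'$ to $V$ while keeping the same intervals yields a witness of $G$, so $\gamma(G')\ge \gamma(G)=k$. Hence in both cases it suffices to exhibit a witness of $G'$ with the asserted number of intervals.

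To get the bound $\gamma(G')\le k+1$, I would start from a $k$-witness $G^w$ of $G$ in normal form (Lemma~\ref{lem:distinct_weight}), with intervals $I_i=[a_i,b_i]$, keep the original weights, and append a single far-away degenerate interval $I_{k+1}=[C,C]$ with $C>b_k+2\max_{v\in V}w(v)$. To each pendant $x$ I assign $w'(x)=C-w(p(x))$. Then $w'(x)+w(p(x))=C\in I_{k+1}$, so the pendant edge is realised; for any $u\ne p(x)$ the sum $w'(x)+w(u)=C+(w(u)-w(p(x)))$ differs from $C$ because normal form makes all weights distinct, and it exceeds $b_k$ by the choice of $C$, so it lies in no interval; and every pendant–pendant sum is at least $2C-2\max_v w(v)>C$, hence also outside every interval. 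The original pairs are untouched and all sum to less than $C$, so $I_{k+1}$ introduces no spurious edge. This yields a $(k+1)$-witness, and the only work is these routine inequalities.

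For the free case, where I must show $\gamma(G')=k$, I may assume by Lemma~\ref{lem:left-right-unbalanced} and Corollary~\ref{cor:normal_form_free} a right-free $k$-witness in normal form. The structural fact I would exploit is that right-freeness forbids \emph{any} pair of weight above $b_k$: such a pair lies in no interval, hence is a non-edge, contradicting $w(e)<a_k$. Therefore enlarging the last interval to $I_k'=[a_k,B]$ for large $B$ creates no edge among the vertices of $V$, and the whole range $(b_k,B]$ becomes free to receive pendant edges without spending an extra interval. The goal is then to choose, for each pendant $x$, a weight placing $w'(x)+w(p(x))$ inside $I_k'$ while keeping every other sum $w'(x)+w(u)$ and every pendant–pendant sum outside $\bigcup_i I_i'$.

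The genuinely delicate point — and where the free hypothesis has to do the work — is precisely this isolation requirement: a pendant must be adjacent to $p(x)$ and to nobody else. Since the multiset $\{\,w'(x)+w(u):u\in V\,\}$ is merely a translate of the fixed set of original weights, asking that this translate meet the long interval $I_k'=[a_k,B]$ in only the coordinate of $p(x)$ essentially forces $p(x)$ to be the heaviest relevant vertex, so that $w'(x)+w(u)$ overshoots $B$ for every $u$ heavier than $p(x)$ (symmetrically, one could use left-freeness and the lightest vertex). I would therefore fix the weight order on $V$, process the attachment points from heaviest to lightest, and for each pendant pick $w'(x)$ so that its neighbour is exactly caught while all heavier vertices overshoot $B$. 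The step I expect to be hardest is verifying this \emph{simultaneously} for all pendants and all non-neighbours — in particular checking that no pendant-induced sum slips back into one of the unchanged intervals $I_1,\dots,I_{k-1}$; confining every pendant-induced sum to the cleared region $(b_k,\infty)$ via the right-free observation is the mechanism I would rely on to keep this controllable.
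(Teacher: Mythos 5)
Your lower bound (restricting any witness of $G'$ to $V$ shows $\gamma(G')\ge\gamma(G)=k$) and your non-free construction are correct, and the latter coincides with the paper's proof: the paper takes $M=\max\{\max_{v\in V}w(v),b_k\}+1$, assigns each pendant $x$ the weight $2M-w(p(x))$, and adds the degenerate interval $[2M,2M]$, which is exactly your construction with $C=2M$.

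The free case is where your proposal stops being a proof, and the step you defer as ``hardest'' is not merely hard: it is impossible, so no choice of pendant weights can complete your plan. The obstruction is the one you half-identified. Once the pendant edge $xp(x)$ is placed in the enlarged top interval $[a_k,B]$, every vertex $u$ with $w(u)<w(p(x))$ gives a sum $w'(x)+w(u)$ that falls short of $w'(x)+w(p(x))\le B$ by less than $\max_{v}w(v)$, yet still exceeds $b_k$; hence it lies inside $[a_k,B]$ and creates a spurious edge. Right-freeness constrains only pairs inside $V$ and says nothing about these mixed sums, and processing pendants from heaviest to lightest cannot help, because the offending non-neighbours are the vertices lighter than $p(x)$, not the heavier ones. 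Indeed, the claimed equality $\gamma(G')=k$ is false. Let $G$ be the paw: a triangle $a,b,c$ with a pendant $x$ attached to $a$. Then $\gamma(G)=1$, and the witness $w(a)=1$, $w(b)=2$, $w(c)=3$, $w(x)=10$ with $I_1=[3,11]$ is left-free (the non-edges $bx,cx$ weigh $12,13>11$). Now add a single pendant $y$ at $b$; the resulting $G'$ is the bull. In any purported $1$-witness of the bull with interval $[L,U]$ the weights of $a,b,c$ must be pairwise distinct (equal weights would give an edge and a non-edge at $x$ or $y$ the same weight), and up to the automorphism exchanging $(a,x)$ with $(b,y)$ there are three orderings. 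If $w(a)<w(c)<w(b)$: the non-edges $cx,cy$ force $w(x)>U-w(c)$ and $w(y)<L-w(c)$, while the edges $ax,by,ac,bc$ force $w(x)\le U-w(a)$, $w(y)\ge L-w(b)$, $w(a)+w(c)\ge L$, $w(b)+w(c)\le U$; summing yields $L<w(x)+w(y)<U$, contradicting that $xy$ is a non-edge. If $w(c)<w(a)<w(b)$: the non-edges $cx,bx$ give $U-L<w(b)-w(c)$, while the edges $ac,ab$ give $U-L\ge w(b)-w(c)$. If $w(a)<w(b)<w(c)$: the non-edges $ay,cy$ give $U-L<w(c)-w(a)$, while the edges $ab,bc$ give $U-L\ge w(c)-w(a)$. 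So the bull is not a star-$1$-PCG, and $\gamma(G')=2>k$. You should know that the paper's own proof of this case is the single assertion that $I'_k$ and $I'_{k+1}$ ``can be merged'', which commits exactly the error above: for $w(u)<w(p(x))$ the sum $2M-w(p(x))+w(u)$ lands in $(b_k,2M)$, inside the merged interval. What your correct half actually establishes, and the most that is true in general, is $k\le\gamma(G')\le k+1$ regardless of whether a free witness exists.
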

\begin{proof}
Let $G=(V,E)$ be a graph with star number $k$ and let $G^w$ be a $k$-witness graph  in normal form with $I_i=[a_i, b_i]$ for $1\leq i \leq k$. Assume first $G^w$ is not free. From Lemma~\ref{lem:distinct_weight} we can assume all the vertices have distinct weights. Denote by $W=\max_{v\in V} w(v)$ and set $M=\max \{W, b_k\} +1$.  Now,  let $G'=(V\cup X ,E')$ obtained from $G$ by adding the pedant vertices in $X$. Then we define the graph $G'^{w'}$ from $G^w$ as follows: 
$$
   w'(u) = 
    \begin{cases} 
     w(u)  & \text{if } u \in V, \\
      2M-w(v) & \text{if } u \in X \text{and } p(u)=v
    \end{cases}
$$
Notice that the weights of the edges in $E$ do not change and moreover each pendant edge has weight exactly $2M$.  For $1\leq i \leq  k$ we set  $I'_i=I_i$ and $I'_{k+1}=[2M, 2M]$.  We show now that $G'^{w'}$ is a $(k+1)$-witness for $G'$.

For any edge $uv$ with $u \in V, v\in V$, we have that there exists an $1\leq i\leq k$ such that $w(uv) \in I_i$ and clearly by definition of $w'$ we have $w'(uv) = w(uv) \in I_i=I'_i$. Next, for any pendant edge $xv$ in $G'$ with $x \in X$ and $p(x)=v \in V$ we have $w'(xv)=2M-w(v)+w(v)=2M\in I'_{k+1}$.  

For any non-edge $uv$, with $u \in V$, $v\in V$ we have $w'(uv) = w(uv) \not \in \cup_{1\leq i\leq k}I'_i$. Moreover, as $w(u)+w(v) < M + M$ (as the weights are all distinct in $G^w$) we have that $w'(uv) \not \in I'_{k+1}$.   

Next, for any non edge $xu$ with $x \in X$ and $u \in V$, we have $w'(xu) = 2M - w(p(x)) + w(u)$. As $p(x) \neq u$ and all the vertices in $G^w$ have distinct weights we have that  $w'(xu) \neq 2M$ and thus it is not in $I'_{k+1}$. Moreover, $w'(xu) = 2M - w(p(x)) + w(u) > M + w(u) > b_k$, where in the first and last inequality we use $M> w(p(x))$ and $M> b_k$, respectively. Hence, $w'(xu) \not \in \cup_{1\leq i\leq k}I'_i $.

Finally, for any non edge $x_1 x_2$ with $x_1 \in X$ and $x_2 \in X$, we have $w'(x_1 x_2)= 4M -w(p(x_1)) - w(p(x_2)) > 4M - M - M =2M$, where the inequality holds as $M$ is strictly greater than any weight of the vertices in $V$. Hence, $w'(x_1 x_2) \not \in \cup_{1\leq i\leq k+1}I'_i $.

\noindent
Thus $G'^{w'}$ is a $(k+1)$-witness for $G'$.

Finally, notice that if there exists a $G^w$ which is free, thanks to Lemma~\ref{lem:left-right-unbalanced} and Corollary~\ref{cor:normal_form_free} we can always assume $G^w$ is normal and right-free.  The proof follows identically as in the previous case noticing that $I'_k$ and $I'_{k+1}$ can be merged together meaning that $G'^{w'}$ is a $k$-witness for $G'$.
\end{proof}

In Figure~\ref{fig:pendant}a we show a path $P_5$, which has no free $1$-witnesses (see Corollary~\ref{corollary_path_balanced}), and  for which the addition of a pendant vertex  does not increase the number of intervals (see Figure~\ref{fig:pendant}b).  Next, from Theorem~\ref{corollary_path_balanced} the caterpillar in Figure~\ref{fig:pendant}b has no free $1$-witness.   By adding another pendant vertex as in Figure~\ref{fig:pendant}c we obtain an example of an \emph{asteroidal triple} $AT$ \cite{Golumbic2004}, for which in  \cite{Jamison2021} it was shown that $\gamma(AT) > 1$ and in \cite{monti2024starkpcgs} a construction was provided concluding that $\gamma(AT) =2$.  Thus, in this case  the addition of a pendant vertex necessarily increases the number of intervals.


\begin{figure*}
\centering
\includegraphics[width=\linewidth]{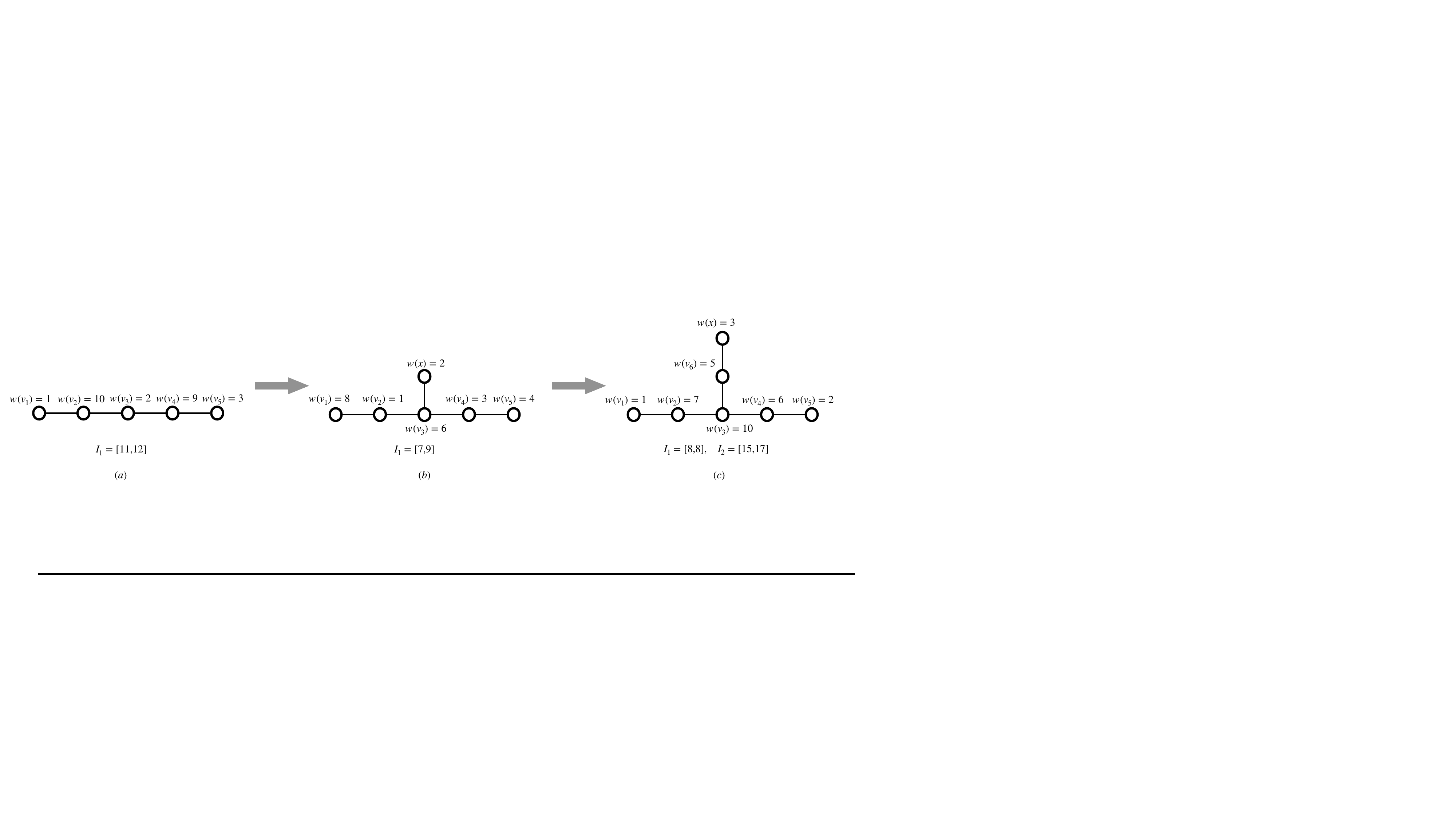} 
\caption{(a)-(b) An example of a star-$1$-PCG for which  no  $1$-witness is free and where the addition of a pedant vertex does not increase the number of intervals (b)-(c) an example of a star-$1$-PCG for which no  $1$-witness is free and where the addition of a pendant vertex increases the number of intervals (see the graph $G_{70}$ in \cite{monti2024starkpcgs}).}\label{fig:pendant}
\end{figure*}

\begin{theorem}\label{theo:twin}
Given a graph $G=(V,E)$ with $\gamma(G) = k$, let $v\in V$ and let $X$ be a set of vertices for which $V\cap X =\emptyset$. For any graph $G'$ obtained from $G$ by adding the vertices of $X$ in such a way that $\{v\} \cup X$ is a set of pairwise false (true) twins in $G'$, it holds $\gamma(G')\leq k+1$.    \end{theorem}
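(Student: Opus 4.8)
The plan is to start from a $k$-witness $G^w$ of $G$ that is in normal form (Lemma~\ref{lem:distinct_weight}), with intervals $I_1,\dots,I_k$, and to build a witness for $G'$ by giving \emph{every} new vertex of $X$ the weight $w(v)$, leaving all weights of $V$ (in particular $w(v)$) unchanged. The point of this choice is that it makes all required adjacencies outside the twin class automatic: for a new vertex $x$ and any $u\in V\setminus\{v\}$ we get $w'(x)+w'(u)=w(v)+w(u)=w(vu)$, so $x$ sees $u$ exactly when $v$ does, i.e. exactly on $N(v)$; and for any two vertices of $\{v\}\cup X$ the weight-sum is exactly $2w(v)$. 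Hence, with the original intervals, every edge and non-edge of $G'$ that is not internal to $\{v\}\cup X$ is already correct, and the only remaining quantity to control is whether the single value $2w(v)$ is covered by the intervals.

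First I would record the key consequence of normal form: applying condition (ii) of Definition~\ref{def:normal_form} to $x=v$ gives $2w(v)\neq w(u_1)+w(u_2)$ for every pair of distinct $u_1,u_2\in V$. Thus $2w(v)$ differs from the weight $w(e)$ of every pair of vertices of $G$, in particular from every edge-weight and every non-edge-weight; let $\delta>0$ be the minimum of these finitely many positive differences. This is precisely the slack that lets me modify the intervals in a neighbourhood of $2w(v)$ without disturbing any pair inherited from $G$.

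Then I would split into the two cases according to the position of $2w(v)$. For \emph{true twins} the internal pairs of $\{v\}\cup X$ must become edges: if $2w(v)\in\bigcup_i I_i$ the original $k$ intervals already achieve this and $\gamma(G')\le k$; otherwise I add the degenerate interval $[2w(v),2w(v)]$, which is disjoint from all $I_i$ and, by the displayed inequality, captures exactly the pairs of weight $2w(v)$, namely the twin pairs, giving $\gamma(G')\le k+1$. For \emph{false twins} the internal pairs must become non-edges: if $2w(v)\notin\bigcup_i I_i$ we are done with $k$ intervals; otherwise $2w(v)$ lies in some $I_j=[a_j,b_j]$, and I replace $I_j$ by $I_j\setminus(2w(v)-\xi,\,2w(v)+\xi)$ for any $0<\xi<\delta$. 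This removes $2w(v)$ from the covered set, turning the twin pairs into non-edges, while (since $\xi<\delta$) no edge-weight of $G$ falls into the removed open gap, so every original edge stays covered. The operation turns $I_j$ into at most two intervals, so again $\gamma(G')\le k+1$.

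The main obstacle is exactly the case where $2w(v)$ sits on the ``wrong side'' of the intervals — inside them for false twins, outside for true twins — since this is where the extra interval is unavoidable; the whole argument hinges on being able to carve out (or insert) a tiny region around $2w(v)$ without altering the status of any pair inherited from $G$, and this is guaranteed by the normal-form separation $2w(v)\neq w(u_1)+w(u_2)$. A few routine checks remain — that assigning equal weights to several vertices is permitted by the definition of a witness, that the modified family of intervals is still pairwise disjoint, and that non-edges of the form $xu$ with $u\notin N(v)$ (which have weight $w(vu)\neq 2w(v)$ by condition (i)) are untouched — but none of these presents any difficulty.
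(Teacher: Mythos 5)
Your proposal is correct and follows essentially the same route as the paper: take a normal-form $k$-witness, give every vertex of $X$ the weight $w(v)$, and then either carve a small gap around $2w(v)$ out of the interval containing it (false twins) or add the degenerate interval $[2w(v),2w(v)]$ (true twins), with normal form guaranteeing that no pair inherited from $G$ is disturbed. Your set-difference formulation $I_j\setminus(2w(v)-\xi,2w(v)+\xi)$ just packages in one line the paper's explicit three-case split of $I_i$ into at most two subintervals; otherwise the arguments coincide.
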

\begin{proof}
 Let $G=(V,E)$ be a graph with star number $k$ and let $G^w$ be a $k$-witness graph in normal form with $I_i=[a_i, b_i]$ for $1\leq i \leq k$.   Fix any $v\in V$ and let  $X$ be such $V\cap X =\emptyset$. We consider first the case of false twins. Hence, let $G'=(V',E')$ such that $V'=V\cup X$ and $E'=E\cup\{xu: x\in X \wedge vu \in E\}$. Then we define $w'$ as follows:
 $$
   w'(u) = 
    \begin{cases} 
     w(u)  & \text{if } u \in V, \\
      w(v) & \text{if } u \in X 
    \end{cases}
$$
Notice that if $2w(v) \not \in \cup_i I_i$ then we could set for all $1\leq i \leq k$, $I'_i=I_i$ and  $G^{w'}$ is clearly a witness for $G'$. Suppose now that there exists  $i$ such  that $2w(v) \in I_i$. Notice that as $G^w$ is normal form there is no edge $e \in E$ such that $w(e)=2w(v)$.  We split the interval $I_i$ into at most two smaller intervals, ensuring that the value $2w(v)$ falls outside these new intervals.  To this purpose, we define 
$$
\epsilon=\min_{u_1, u_2 \in V; \\ w(u_1u_2) \in I_i}  \Big\{  \left| 2w(v) - w(u_1u_2))\right| \Big\}
$$
Note that $\epsilon$ represents the distance of $2w(v)$ from the closest edge of $G$ whose weight falls within $I_i$. Moreover, $\epsilon$ is well-defined as there is at least one edge with weight in $I_i$ (otherwise $\gamma(G)<k$) and $\epsilon>0$ as $G^w$ is in normal form. We consider three cases:
\begin{itemize}
    \item $2w(v) - \epsilon/2 \leq  a_i$. In this case we set the interval $I'_i=[2w(v) + \epsilon/2, b_i]$.
     \item $2w(v) + \epsilon/2 \geq  b_i$. In this case we set the interval $I'_i=[a_i, 2w(v) - \epsilon/2]$.
     \item otherwise, we define two new intervals $I'_i=[a_i, 2w(v) - \epsilon/2]$ and $I''_i=[2w(v) + \epsilon/2, b_i]$.
\end{itemize}
In the first two cases we set the new intervals as $I_1, \ldots, I_{i-1}, I'_{i}, \ldots I_k$ and in the third case as $I_1, \ldots, I_{i-1}, I'_{i}, I''_{i}, \ldots I_k$. In both cases we have at most $k+1$ intervals and it is not difficult to check that $G'^{w'}$ is a $(k+1)$-witness for $G'$.

In the case of true twins we define the $w'$ in the same way as in the case of false twins. Notice that if $2w(v) \in \cup_i I_i$ then we could set for all $1\leq i \leq k$, $I'_i=I_i$ and  $G^{w'}$ is clearly a $k$-witness for $G'$. Otherwise we add the new interval $I=[2w(v), 2w(v)]$.  Notice that as $G^w$ is normal form there is no pair of vertices $u_1, u_2$ such that $w(u_1 u_2)=2w(v)$. It is easy to see that we obtain a $(k+1)$-witness for $G'$.
\end{proof}


\begin{lemma}\label{lem:complement}
Given a graph $G=(V,E)$ with $\gamma(G) = k$, let $\overline{G}$ be its complement. It holds   $|\gamma(\overline{G}) - \gamma(G)| \leq 1$. 
\end{lemma}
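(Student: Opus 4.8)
The plan is to prove the two-sided bound through the single inequality $\gamma(\overline{G})\le \gamma(G)+1$ and then invoke symmetry. Since $\overline{\overline{G}}=G$, applying that same inequality to $\overline{G}$ in place of $G$ yields $\gamma(G)\le \gamma(\overline{G})+1$; combining the two gives $|\gamma(\overline{G})-\gamma(G)|\le 1$. So it suffices to show that if $\gamma(G)=k$ then $\overline{G}$ is a star-$(k+1)$-PCG.

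Let $G^w$ be a $k$-witness of $G$ with intervals $I_i=[a_i,b_i]$ for $1\le i\le k$, ordered so that $b_i<a_{i+1}$. I would keep exactly the same weight function $w$ for $\overline{G}$ and only redesign the intervals. The guiding observation is that $uv$ is an edge of $\overline{G}$ precisely when $uv\notin E$, i.e. precisely when $w(uv)\notin \bigcup_i I_i$. Hence the intervals for $\overline{G}$ must capture exactly the non-edge sums of $G$ and avoid its edge sums. Writing the complement of $\bigcup_i [a_i,b_i]$ in $\mathcal{R}$ as the union of the $k+1$ open \emph{gaps}
\[
(-\infty,a_1),\ (b_1,a_2),\ \dots,\ (b_{k-1},a_k),\ (b_k,+\infty),
\]
every non-edge sum of $G^w$ lies in exactly one gap, while every edge sum lies in $\bigcup_i I_i$ and hence in none of them.

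For each gap that contains at least one non-edge sum, I would define a new interval equal to the convex hull $[\min,\max]$ of the finitely many non-edge sums lying in that gap. This produces at most $k+1$ intervals. The key point to verify is that this is a valid $(k+1)$-witness for $\overline{G}$: each such convex hull is the hull of a finite point set inside an open gap, so it is a closed interval contained strictly inside that gap; consequently the new intervals are pairwise disjoint, appear in increasing order, and contain no edge sum of $G$. Correctness then reduces to two inclusions. A non-edge of $G$ has its sum in some gap and is therefore captured by that gap's interval; conversely, any pair whose sum lands in one of the new intervals has sum inside a gap, hence outside $\bigcup_i I_i$, hence is a non-edge of $G$, i.e. an edge of $\overline{G}$.

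The main obstacle, and the only place requiring care, is the boundary/closedness issue: a priori one might worry that forming closed intervals forces an endpoint $b_i$ or $a_{i+1}$ (an edge sum) into a new interval, or that an edge and a non-edge share the same numerical sum. Both worries dissolve from the single fact that the set of edge sums and the set of non-edge sums of $G^w$ are disjoint as subsets of $\mathcal{R}$ (the former lie in $\bigcup_i I_i$, the latter in its complement). This guarantees that every non-edge sum sits strictly inside its open gap, so each gap's convex hull never touches any $a_i$ or $b_i$, and the construction separates edges of $\overline{G}$ from non-edges cleanly. The normal-form assumption of Lemma~\ref{lem:distinct_weight} is available should one prefer to reason with strict separations throughout, but it is not essential here. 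This yields $\gamma(\overline{G})\le k+1$, and the symmetry argument above completes the proof.
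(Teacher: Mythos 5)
Your proposal is correct and takes essentially the same route as the paper: keep the weight function unchanged and use the gaps of $\mathcal{R}^+\setminus\bigcup_i I_i$ as the new intervals, yielding $\gamma(\overline{G})\le k+1$, with the two-sided bound following from the involution $\overline{\overline{G}}=G$ (which the paper packages as a w.l.o.g.\ inside a proof by contradiction). Your additional step of shrinking each gap to the closed convex hull of the non-edge sums it contains is a welcome refinement, since the paper's gaps are open or unbounded and hence not literally closed intervals as the definition requires.
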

\begin{proof}
Assume on the contrary that there exists $G$ and $\overline{G}$ such that $|\gamma(\overline{G}) - \gamma(G)| \geq  2$.  Let $\gamma(G)=k$,  and assume \textit{w.l.o.g.} that $ \gamma(\overline{G}) \geq \gamma(G) +2$. $G^w$ be a $k$-witness graph  with $I_i=[a_i, b_i]$ for $1\leq i \leq k$. We construct $\overline{G}^{w'}$ as follows: $w'=w$ and  $I'_1, \ldots, I'_{k'}$ are the intervals defined by the set  $R^+\setminus \bigcup_i I_i$.  By construction $k'=k+1$ and it is not difficult to see that  ${\overline{G}}^{w'}$ is a $k'$-witness for  $\overline{G}$. Thus, we have that $\gamma(\overline{G})  \leq k'=k+1=\gamma(G)+1$ contradicting our initial hypothesis.
\end{proof}

\begin{theorem} \label{theo:complement}
Given a graph $G=(V,E)$ with $\gamma(G) = k$, let $\overline{G}$ be its complement. It holds
\begin{itemize}
    \item[(i)]  if there exists a $k$-witness $G^w$ that is both left- and right-free then $\gamma(\overline{G}) = k-1$.
    \item[(ii)] otherwise if there exists a $k$-witness $G^w$ that is free  then $\gamma(\overline{G}) \leq k$.
    \item[(iii)] otherwise if no $k$-witness $G^w$  is free  then $\gamma(\overline{G}) \leq k+1$.
\end{itemize}
\end{theorem}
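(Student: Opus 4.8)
The plan is to refine the complement construction already used for Lemma~\ref{lem:complement}, pruning the intervals it produces according to the free properties. Starting from a $k$-witness $G^w$ with ordered intervals $I_i=[a_i,b_i]$, I would keep the same weights $w'=w$ and recall that a pair $uv$ is an edge of $\overline{G}$ exactly when $uv$ is a non-edge of $G$, i.e. exactly when $w(uv)$ lies in one of the $k+1$ gaps of $\mathcal{R}^+\setminus\bigcup_i I_i$: the leftmost gap $[0,a_1)$, the $k-1$ middle gaps $(b_i,a_{i+1})$ with $1\le i\le k-1$, and the rightmost gap $(b_k,\infty)$. The guiding observation is that a gap is needed as an interval of the witness for $\overline{G}$ only when it actually contains a non-edge of $G$; a gap containing no non-edge contains no pair at all (edges of $G$ lie inside the $I_j$), so discarding it is harmless — every edge of $\overline{G}$ still falls into a kept gap and every edge of $G$ still falls outside all of them.

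Next I would classify the gaps. By Claim~\ref{claim:non_edge_between_intervals} each middle gap $(b_i,a_{i+1})$ contains a non-edge of $G$, so all $k-1$ middle gaps are indispensable. Since a non-edge can never have weight inside an $I_j$, a non-edge of weight at most $b_1$ must lie in $[0,a_1)$ and a non-edge of weight at least $a_k$ must lie in $(b_k,\infty)$; hence the leftmost gap is nonempty precisely when $G^w$ is not left-free, and the rightmost gap is nonempty precisely when $G^w$ is not right-free. This is the one genuinely load-bearing equivalence, and everything afterward is counting.

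The three cases then follow. In case (iii) no extreme gap can be pruned a priori, so keeping all $k+1$ gaps gives $\gamma(\overline{G})\le k+1$, which is just Lemma~\ref{lem:complement}. In case (ii) I would take, using Lemma~\ref{lem:left-right-unbalanced}, a left-free (equivalently right-free) witness; its leftmost gap is empty and may be discarded, leaving at most $k$ intervals, so $\gamma(\overline{G})\le k$. In case (i) a single witness that is simultaneously left- and right-free lets me discard both extreme gaps, leaving exactly the $k-1$ middle ones and giving $\gamma(\overline{G})\le k-1$; to promote this inequality to the claimed equality I would invoke Lemma~\ref{lem:complement} (applied with $\overline{G}$ in the role of the graph), which yields $\gamma(\overline{G})\ge\gamma(G)-1=k-1$.

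The only technicalities I would not spell out are the conversion of the half-open gaps into genuine closed intervals — inside each kept gap choose a closed interval spanning the finitely many non-edge weights that fall there, and cap the rightmost gap at, say, $2\max_{v}w(v)+1$ — together with the routine verification that the pruned family of intervals is still a valid witness. The main obstacle is not in any single step but in getting the bookkeeping exactly right: matching ``extreme gap empty'' to the free property, and being careful that the lower bound needed for equality in case (i) comes from Lemma~\ref{lem:complement} rather than from the construction itself, which only supplies upper bounds.
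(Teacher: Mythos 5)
Your proposal is correct and follows essentially the same route as the paper: both use the complement construction of Lemma~\ref{lem:complement} (keep the weights, take the gaps of $\mathcal{R}^+\setminus\bigcup_i I_i$ as intervals), prune the extreme gaps that are empty of non-edges exactly when the witness is left-/right-free, and obtain the equality in case (i) from the lower bound supplied by Lemma~\ref{lem:complement}. Your write-up is in fact somewhat more careful than the paper's, since it spells out the gap-versus-freeness equivalence and the conversion of open gaps into closed intervals, which the paper leaves implicit.
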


\begin{proof}
The three items can be proved using the same construction described in Lemma~\ref{lem:complement} and observing that: 
 (i) if $G^w$ is both left and right free, then the intervals $I'_1$ or $I'_{k'}$  are not used, that is, there is no edge $e$ of $\overline{G}$ whose weight falls in $I'_1$ or $I'_{k'}$ and thus, $\gamma(\overline{G}) \leq k-1$ and by the result of Lemma~\ref{lem:complement} we have that the equality must hold; (ii) if $G^w$ is free then exactly one among $I'_1$ or $I'_{k'}$ is not used and hence  $\gamma(\overline{G}) \leq k$; (iii) if $G^w$ is not free, then the construction we provide uses all the intervals and hence we can only claim that $\gamma(\overline{G}) \leq k+1$.
\end{proof}

Notice that there are examples of graphs showing that the inequalities in (ii) and (iii) cannot be tight. Every graph that is self-complementary  (\textit{i.e.} a graph which is isomorphic to its  complement) satisfies $\gamma(G)=\gamma(\overline{G})$. Thus, self-complementary graphs that have a free witness are examples for which the inequality in (ii) is tight. An example is shown in Figure~\ref{fig:auto_complementary_free}.
Next, the complement of the cycle on 7 vertices, $\overline{C_7}$, has star number 3 (see \cite{monti2024starkpcgs}) and a right-free $3$-witness is depicted in Figure~\ref{fig:auto_complementary_path}. As the star number of cycles is $2$ \cite{Monti2023} this graph provides an example for which the inequality in (ii) is strict.

Consider now the inequality in (iii).  Self-complementary graphs for which no free witness exists (like $P_4$ and $C_5$) are examples for which the inequality in (iii) is strict.  Moreover, by Theorem~\ref{theo_path_balanced}, $C_7$ has no free $1$-witnesses and thus provides an example for which the inequality in (iii) is tight.

\section{The star number of acyclic graphs}\label{sec:acyclic}
In this section, we show how the results from Section~\ref{sec:operations} can be applied to determine the star number of lobster graphs and more generally, improve the existing bounds on the star number of acyclic graphs.  The following result holds.

\begin{theorem}\cite{Jamison20,monti2024starkpcgs} \label{theo:caterpilalr}
For any caterpillar $T$, it holds $\gamma(T)=1$.
 \end{theorem}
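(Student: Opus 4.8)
The plan is to exhibit a single-interval witness directly. A caterpillar is bipartite, so the decisive difficulty is not the non-edges within a colour class but the non-edges \emph{across} the two classes, and I would design the weights so that the former are handled for free and only the latter need attention. First I would pass to a convenient ``signed'' form: writing $w(x)=z(x)+Z$ for a large constant $Z$ and taking the target interval centred at $2Z$ with half-width $1$, the condition $w(u)+w(v)\in I$ becomes $z(u)+z(v)\in[-1,1]$. This lets me work with signed positions $z(\cdot)$ and restore positivity at the very end by choosing $Z$ large. Let the spine be $v_1,\dots,v_m$ and let $v_i$ carry leaves $\ell_{i,1},\dots,\ell_{i,d_i}$; recall that the two spine ends have degree at least $2$, hence at least one leaf each.

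The key device is to place on each vertex a huge alternating term according to its colour class (the parity of its spine index) plus a small correction. Concretely I would set $z(v_i)=(-1)^i H+\phi_i$ with $\phi_i=(-1)^i i$, and $z(\ell_{i,t})=(-1)^{i+1}H+\psi_{i,t}$ with $\psi_{i,t}=-\phi_i+\tau_{i,t}$ for small corrections $\tau_{i,t}$ and $H$ enormous. For any two vertices of the same class the huge terms add up to magnitude about $2H$, automatically throwing the sum outside $[-1,1]$; this disposes at once of all same-class non-edges, in particular all spine pairs at even distance and all pairs of leaves hanging from spine vertices of equal parity (including two leaves on a common spine vertex). What survives are the opposite-class pairs, where the $H$-terms cancel and only the corrections matter. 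Here one checks the routine parity bookkeeping: $\phi_i+\phi_{i+1}=(-1)^{i+1}$ has magnitude $1$ (spine edges, inside), $\phi_i+\phi_j=(-1)^i(i-j)$ has magnitude at least $3$ for opposite-parity non-adjacent spine pairs (outside), the leaf edge sum is $\phi_i+\psi_{i,t}=\tau_{i,t}$ (inside once $|\tau_{i,t}|<1$), and a leaf against a non-incident same-parity spine vertex gives $(-1)^i(j-i)+\tau_{i,t}$ with $|j-i|\ge 2$ (outside). Taking $d_i=0$ everywhere recovers the known fact $\gamma(P_n)=1$ for the spine alone.

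The hard part, and the only genuine obstacle, is the one remaining family of opposite-class non-edges: two leaves $\ell_{i,t}$ and $\ell_{i+1,s}$ hanging from \emph{adjacent} spine vertices, whose corrected sum is $(-1)^i+\tau_{i,t}+\tau_{i+1,s}$, sitting right at the boundary magnitude $1$. This is exactly the collision that defeats the naive ``reflect each leaf across its spine vertex'' choice of weights, and it is where I expect all the care to go. I would break it with a monotone bias along the spine: take $\tau_{i,t}=(-1)^i\beta_{i,t}$ with every $\beta_{i,t}$ positive, bounded by $\tfrac14$, and strictly decreasing from each spine vertex to the next, so that every $\beta$ on $v_i$ exceeds every $\beta$ on $v_{i+1}$. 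The dangerous sum then becomes $(-1)^i\bigl(1+\beta_{i,t}-\beta_{i+1,s}\bigr)$, of magnitude strictly greater than $1$ and hence safely outside, while leaf edges remain at magnitude $\beta_{i,t}<1$ inside. After confirming the finitely many parity cases (with the degenerate cases $m\le 2$, e.g.\ stars and double stars, handled identically), I would finally perturb the weights slightly to make them pairwise distinct and pick $Z$ large enough to render every weight positive. This produces a $1$-witness for $T$, giving $\gamma(T)=1$.
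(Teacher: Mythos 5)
Your proof is correct, but note that the paper never proves this statement at all: Theorem~\ref{theo:caterpilalr} is imported from the cited references \cite{Jamison20,monti2024starkpcgs} and then used as a black box (together with Theorem~\ref{theo:pendant} and Corollary~\ref{cor:binary_tree_lowerbound}) to settle lobsters, so the comparison here is between your self-contained construction and an external citation. Two remarks on why your direct construction has value. First, the result genuinely cannot be recovered from the paper's own operational machinery: a caterpillar arises from its spine (a path) by adding pendant vertices, but by Theorem~\ref{theo_path_balanced} and its corollary no $1$-witness of $P_n$ ($n\geq 4$) is free, so Theorem~\ref{theo:pendant} would only yield $\gamma(T)\leq 2$; a fresh weighting is needed, and you supply one. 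Second, your construction checks out: the huge alternating term $\pm H$ places every same-class pair (in particular, two leaves on a common spine vertex) far outside the interval; all edges, spine and leaf alike, have cancelling $H$-terms and land in $[-1,1]$; opposite-class non-edges involving a spine vertex have correction sums of magnitude at least $2-\tfrac{1}{4}$; and the only boundary-critical family, leaves on adjacent spine vertices, is handled by your monotone bias, since $\beta_{i,t}>\beta_{i+1,s}$ forces the sum $(-1)^i\bigl(1+\beta_{i,t}-\beta_{i+1,s}\bigr)$ to have magnitude strictly above $1$, while leaf edges stay at magnitude $\beta_{i,t}<1$; finally the affine shift by $Z$ restores positivity of all weights and of the interval $[2Z-1,2Z+1]$. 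Minor quibbles only: the closing perturbation to make weights pairwise distinct is unnecessary (the paper's definition of a witness does not require distinct weights), and the observation that spine ends carry at least one leaf each is never actually used; neither affects correctness.
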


Since $AT$ (see the graph in Figure~\ref{fig:pendant}c) has star number 2 and is a subgraph of any binary tree of radius at least two, we have the following corollary. 

\begin{corollary}\label{cor:binary_tree_lowerbound}
For any binary tree $T$ of radius at least 2 it holds $\gamma(T)\geq 2$.
\end{corollary}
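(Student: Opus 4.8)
The plan is to derive the bound from two ingredients that are already in hand: the value $\gamma(AT)=2$ for the spider $AT$ of Figure~\ref{fig:pendant}c (established by the constructions cited in \cite{Jamison2021,monti2024starkpcgs}), and the monotonicity of the star number under taking induced subgraphs. Concretely, I would first record the auxiliary fact that whenever $H$ is an induced subgraph of a graph $G$, then $\gamma(H)\le \gamma(G)$. Given this, it suffices to exhibit $AT$ as an induced subgraph of every binary tree $T$ of radius at least two, because then $2=\gamma(AT)\le\gamma(T)$, which is exactly the claim.

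To establish the monotonicity I would take a $\gamma(G)$-witness $G^w$ with intervals $I_1,\dots,I_k$ and simply restrict the weight function $w$ to $V(H)$, keeping the same intervals. Since $H$ is an \emph{induced} subgraph, for any two vertices $u,v\in V(H)$ we have $uv\in E(H)$ if and only if $uv\in E(G)$, which in turn holds if and only if $w(u)+w(v)\in\bigcup_i I_i$. Hence the restricted weighting is a valid $k$-witness for $H$, so $\gamma(H)\le k=\gamma(G)$. This is the star-PCG analogue of standard hereditary behaviour and needs no new machinery.

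The substantive step is the embedding: I must locate an induced copy of $AT$ inside an arbitrary binary tree $T$ of radius at least two. The idea is to use a center $c$ of $T$, whose eccentricity equals $rad(T)\ge 2$, together with the binary branching to produce a vertex of degree three all of whose neighbours are non-leaves; from such a vertex the three legs of length two of $AT$ can be read off. Because $T$ is a tree, any connected seven-vertex subtree realising this pattern is automatically induced (its three legs are vertex-disjoint and no extra edges appear), so the induced subgraph is exactly $AT$. This containment is precisely the transition flagged in the discussion preceding the corollary, namely from the caterpillar of Figure~\ref{fig:pendant}b to $AT$ in Figure~\ref{fig:pendant}c, and it is where the branching hypothesis is essential: a caterpillar contains no induced copy of $AT$ and indeed has star number $1$ by Theorem~\ref{theo:caterpilalr}.

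I expect this embedding to be the main obstacle, since it is the only place where the precise definitions of \emph{binary tree} and of radius enter; the careful bookkeeping is to guarantee that the three neighbours of the chosen degree-three vertex each admit a further vertex extending the corresponding leg to length two, which is what the radius hypothesis is meant to supply. Once the induced copy is produced, the conclusion $\gamma(T)\ge\gamma(AT)=2$ follows immediately from the monotonicity fact, completing the proof.
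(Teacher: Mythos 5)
Your route is the same as the paper's: the paper's entire justification of this corollary is that $\gamma(AT)=2$ and that $AT$ is a subgraph of every binary tree of radius at least $2$, with monotonicity of $\gamma$ under induced subgraphs left implicit. Your monotonicity lemma is correct and proved correctly (restrict the weight function, keep the intervals), and your remark that a connected subgraph of a tree is automatically an induced subgraph is also right.

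The genuine gap is precisely the step you flagged and deferred: embedding $AT$ into an arbitrary binary tree of radius at least $2$. That step cannot be carried out, because the claim is false at radius $2$. Consider the complete binary tree of height $2$: a root $r$, its two children $c_1,c_2$, and four leaves, two under each $c_i$. Its radius is $2$ (the root has eccentricity $2$), yet it is a caterpillar --- deleting its four leaves yields the path $c_1\,r\,c_2$ --- so by Theorem~\ref{theo:caterpilalr} its star number is $1$; in particular it contains no copy of $AT$ (both graphs have seven vertices and six edges, but different degree sequences, so neither can be a subgraph of the other). Your concrete plan of producing ``a vertex of degree three all of whose neighbours are non-leaves'' fails on this tree: its only degree-$3$ vertices are $c_1$ and $c_2$, each of which has two leaf neighbours. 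The same example shows the paper's own one-line justification (and the corollary as literally stated) needs a stronger hypothesis. The robust fix is to replace the radius condition by the hypothesis that $T$ is not a caterpillar: for trees, containing $AT$ as a (necessarily induced) subtree is equivalent to not being a caterpillar, and then your monotonicity lemma finishes the argument. A radius bound alone cannot do the job --- if ``binary tree'' admits degree-$2$ spine vertices (e.g.\ paths, or caterpillars in which each spine vertex carries one pendant leaf), there are binary caterpillars of arbitrarily large radius, all with star number $1$.
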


A lobster can be obtained from a caterpillar by adding pendant vertices. Thus, we have the following result.

\begin{theorem}
For any lobster $L$, with radius at least 2 it holds $\gamma(L)=2$. 
\end{theorem}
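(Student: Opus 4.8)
The plan is to pin $\gamma(L)$ from both sides and show the two bounds meet at $2$. For the upper bound I would exploit the fact, already noted in the text, that a lobster is obtained from a caterpillar by attaching pendant vertices, and invoke the pendant operation (Theorem~\ref{theo:pendant}). For the lower bound I would locate a copy of the asteroidal triple $AT$ as an induced subgraph of $L$ and use that $\gamma(AT)=2$ together with the monotonicity of $\gamma$ under induced subgraphs.

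For $\gamma(L)\le 2$: let $C$ be the graph obtained from $L$ by deleting all of its leaves. By the definition of a lobster $C$ is a caterpillar, so by Theorem~\ref{theo:caterpilalr} we have $\gamma(C)=1$. Since $L$ has radius at least $2$ it is neither a single vertex nor a single edge, so every leaf of $L$ is a pendant vertex whose unique neighbour has degree at least $2$ in $L$ and therefore survives in $C$. Consequently $L$ is exactly $C$ together with a set $X$ of added pendant vertices, and applying Theorem~\ref{theo:pendant} with $G=C$ and $k=1$ gives $\gamma(L)\le \gamma(C)+1=2$. (If $C$ happened to admit a free $1$-witness this would even yield $\gamma(L)=1$, but the lower bound below rules this out.)

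For $\gamma(L)\ge 2$ I would show that $L$ contains the asteroidal triple $AT$ (the tree consisting of a centre joined to three internally disjoint legs of length two, as in Figure~\ref{fig:pendant}) as an induced subgraph. Restricting any $\gamma(L)$-witness of $L$ to the vertex set of such a copy produces a witness of that subgraph using the same intervals, so $\gamma$ is monotone under induced subgraphs; combined with $\gamma(AT)=2$ this forces $\gamma(L)\ge 2$. To exhibit the copy of $AT$ I would use the classical characterisation that a tree is a caterpillar precisely when it contains no subtree isomorphic to $AT$: since $L$ is a lobster that is not a caterpillar it has a vertex carrying three internally vertex-disjoint legs of length two, and because $L$ is a tree these seven vertices span no additional edges, so they induce a copy of $AT$. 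Putting the two inequalities together gives $\gamma(L)=2$.

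The delicate point, and the step I expect to be the real work, is the structural extraction of the induced $AT$ from the hypothesis: one must argue carefully that the radius/lobster condition guarantees a branch vertex with three length-two legs, i.e. that $L$ is genuinely not a caterpillar. This is exactly the boundary that must be handled with care, since caterpillars (for instance $P_4$, or any long path) are lobsters of arbitrarily large radius with $\gamma=1$; thus the lower bound, and hence the equality $\gamma(L)=2$, hinges precisely on ruling out the caterpillar case. Once the induced $AT$ is secured, the upper and lower bounds combine immediately.
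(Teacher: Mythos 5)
Your strategy coincides with the paper's: the upper bound is exactly the paper's argument (strip the leaves of $L$ to obtain a caterpillar $C$, use $\gamma(C)=1$ from Theorem~\ref{theo:caterpilalr}, then reattach the leaves as a set of pendant vertices via Theorem~\ref{theo:pendant} to get $\gamma(L)\leq 2$), and your lower bound via an induced copy of $AT$ is precisely the content of the paper's Corollary~\ref{cor:binary_tree_lowerbound}, whose justification is likewise that $AT$ has star number $2$ and sits inside the larger tree. Your two supporting observations --- that $\gamma$ is monotone under induced subgraphs (restrict the weight function, keep the intervals) and that a tree fails to be a caterpillar exactly when it contains the spider $AT$, necessarily as an \emph{induced} subtree since the host graph is a tree --- are both correct.

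The gap you flag in the lower bound is genuine, but it is a defect of the statement (and of the paper's own proof), not of your argument: ``radius at least $2$'' does not exclude caterpillars. Any path $P_n$ with $n\geq 5$ is a lobster of radius at least $2$ with $\gamma(P_n)=1$, so the claimed equality $\gamma(L)=2$ is false as stated, and no proof can close this. The paper's one-line proof hides the same hole: it cites Corollary~\ref{cor:binary_tree_lowerbound}, which concerns binary trees, whereas a lobster of radius at least $2$ need not be a binary tree nor contain $AT$ (again, paths). Your write-up establishes the theorem for exactly the lobsters for which it is true, namely those that are not caterpillars (for such lobsters the radius condition is automatic, since they contain $AT$), and correctly isolates the missing hypothesis; in this respect it is more careful than the paper's proof.
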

\begin{proof}
The proof follows by a direct application of Theorem~\ref{theo:caterpilalr},  Theorem~\ref{theo:pendant} and Corollary~\ref{cor:binary_tree_lowerbound}.
\end{proof}

The following theorem shows that for any acyclic graph $G$ it holds $\gamma(G) \leq rad(G)$. This represents an advancement beyond the broadly acknowledged general result stating $\gamma(G) \leq |E|$ (see \cite{ahmed17}).

\begin{theorem}
For any acyclic graph $G$ it holds $\gamma(G)\leq rad(G)$.
\end{theorem}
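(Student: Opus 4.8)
The plan is to argue by induction on the radius $r = rad(G)$, reducing the general acyclic graph (a forest) to the connected case and then building a witness for a tree of radius $r$ out of a witness for a tree of radius $r-1$ by reintroducing the leaves we stripped off. The engine of the induction is Theorem~\ref{theo:pendant}: since attaching a set of pendant vertices raises the star number by at most one, it suffices to peel off the outermost layer of leaves and control how the radius drops.

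First I would record the structural fact that drives the recursion. Let $T$ be a tree with $rad(T) = r \ge 2$ and center $c$, and let $T'$ be the tree obtained by deleting every leaf of $T$. I claim $rad(T') \le r-1$. Indeed, a radius of at least $2$ forces the center to be internal, so $c \in V(T')$, and for any vertex $u$ surviving in $T'$ (i.e.\ internal in $T$) we have $d_T(c,u) \le r-1$: otherwise a child of $u$ would sit at distance $r+1$ from $c$, contradicting $ecc_T(c)=r$. Since the path between two internal vertices never passes through a leaf, distances among internal vertices agree in $T$ and $T'$, so the eccentricity of $c$ in $T'$ is at most $r-1$ and hence $rad(T') \le r-1$. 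Note that I only need this one-sided bound, which keeps the step short.

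For the inductive step, observe that $T$ is exactly $T'$ with the deleted leaves reattached as pendant vertices, each anchored at a vertex that survives in $T'$ (for $r \ge 2$ the unique neighbour of a leaf is internal). By the induction hypothesis $\gamma(T') \le rad(T') \le r-1$, and Theorem~\ref{theo:pendant} then gives $\gamma(T) \le \gamma(T') + 1 \le r$. The base case is $r = 1$: a connected acyclic graph of radius $1$ is a star, hence a caterpillar, so $\gamma = 1$ by Theorem~\ref{theo:caterpilalr}.

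The step I expect to cause the most friction is the passage from a single tree to an arbitrary forest, since the radius convention for disconnected graphs takes the maximum over components and leaf deletion is badly behaved on the smallest components: a $K_2$ component has both endpoints as leaves and vanishes entirely, so it cannot be recovered by attaching a pendant to the reduced forest. I would circumvent this by treating the components separately, bounding $\gamma(C_i) \le rad(C_i) \le r$ for every tree component $C_i$ by the argument above, and then assembling a single witness for the disjoint union: place each component's weights in a widely separated band so that all cross-component sums fall outside every interval (the same shifting idea used in Section~\ref{sec:operations} for adding an isolated vertex), while isolated vertices, having degree $0$, are never deleted and are absorbed for free. I would also flag the degenerate low-radius cases (single vertices, single edges) so that the base of the induction stays consistent with the stated bound.
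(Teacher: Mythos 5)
Your connected-case induction is sound, and it is essentially the paper's argument run in the opposite direction: the paper grows the tree outward from a center, adding one BFS layer per step as a set of pendant vertices and invoking Theorem~\ref{theo:pendant} once per layer, while you peel leaves inward and invoke the same theorem once per peel; both use exactly $rad$ applications. The genuine gap is in your treatment of forests. Handling the components separately and then assembling their witnesses in widely separated bands does not give $\gamma(F)\leq\max_i\gamma(C_i)$: each component's witness comes with its own set of intervals, and shifting a component's weights by $M$ shifts all of its internal pairwise sums by $2M$, so its intervals must be shifted along with it. The assembled witness therefore uses the union of all the components' interval sets, yielding only $\gamma(F)\leq\sum_i\gamma(C_i)$, which for $t$ components can be of order $t\cdot rad(F)$ rather than $rad(F)$. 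This is not a removable technicality along that route: the paper's own conclusion singles out $\gamma(G_1+G_2)\leq\gamma(G_1)+\gamma(G_2)$ as the bound that is ``not difficult'' and calls improving it an open question, so a max-type bound for disjoint unions cannot simply be asserted. (The isolated-vertex shift you cite works only because an isolated vertex lies on no edge, so no interval ever needs to contain a sum involving it; a component with edges is a different matter.)

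The repair is to run the construction on the whole forest simultaneously rather than component by component, which is exactly what the paper does: start from the edgeless graph on the set of all component centers, and at step $i$ add, across all components at once, the vertices at distance $i$ from their centers; in a forest each such vertex has exactly one neighbor among the already-placed vertices, so each step is a single application of Theorem~\ref{theo:pendant} to a set $X$ of pendant vertices, the first step costs one interval (the edgeless graph has a free witness), and each later step costs at most one more, giving $\gamma(F)\leq rad(F)$ after $rad(F)$ steps. Alternatively, your leaf-peeling can be fixed for forests: peel all leaves of all components at once, and handle the $K_2$ components that vanish by first re-adding one endpoint of each as an isolated vertex (which leaves the star number unchanged by the isolated-vertex theorem of Section~\ref{sec:operations}) and then attaching its partner, together with all the other peeled leaves, in the same single application of Theorem~\ref{theo:pendant}. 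Either repair closes the gap; as written, the forest case of your proof does not establish the stated bound.
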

\begin{proof}
Let $G$ be an acyclic graph. 
To prove the upper bound we provide a $rad(G)$-witness graph.  Let $\mathcal{CC}$ be the  set of connected components of $G$. Let $|\mathcal{CC}|=t$,  recall that $rad(G)=\max_{{CC}_i \in \mathcal{CC}} {rad({CC}_i)}$ and let $a_i$ denote a center of graph ${CC}_i$. 

We start with the graph $G_0=(V_0,E_0)$ where $V_0=\{a_1, \ldots, a_t\}$ and $E_0=\emptyset$. At step 1 we construct $G_1$ by adding all the vertices adjacent to the vertices in $V_0$ as pendant vertices. Thus, from Theorem~\ref{theo:pendant}  $G_1$ is a $1$-witness. At a generic step $i$ of this procedure we construct the graph $G_i$ from the graph $G_{i-1}$ by adding all the vertices adjacent to the vertices in $V_{i-1}$. Clearly these vertices are all pendant (otherwise they would have been added in a previous step) and we could apply theorem Theorem~\ref{theo:pendant} increasing the number of intervals by at most 1. The number of steps we have to do until we reach $G$ is exactly $rad(G)$. Thus, we have  $\gamma(G)\leq rad(G)$.
\end{proof}

\begin{figure}
\centering
    \includegraphics[scale=0.27]{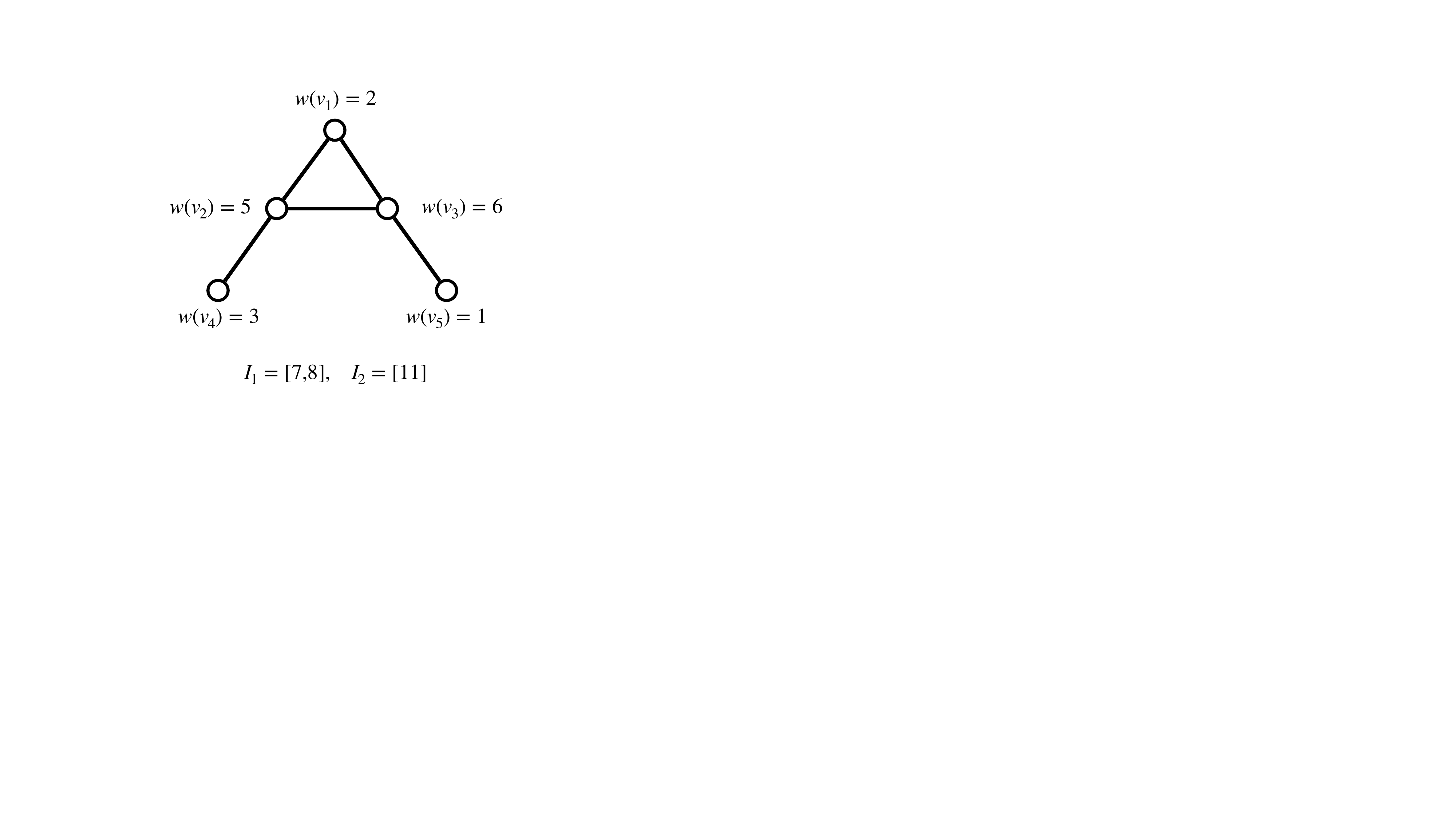}
\caption{A right-free $2$-witness graph of a self-complementary graph $G$ with star number $2$ \cite{monti2024starkpcgs}.}\label{fig:auto_complementary_free}
\end{figure}

\begin{figure}
\centering
    \includegraphics[scale=0.3]{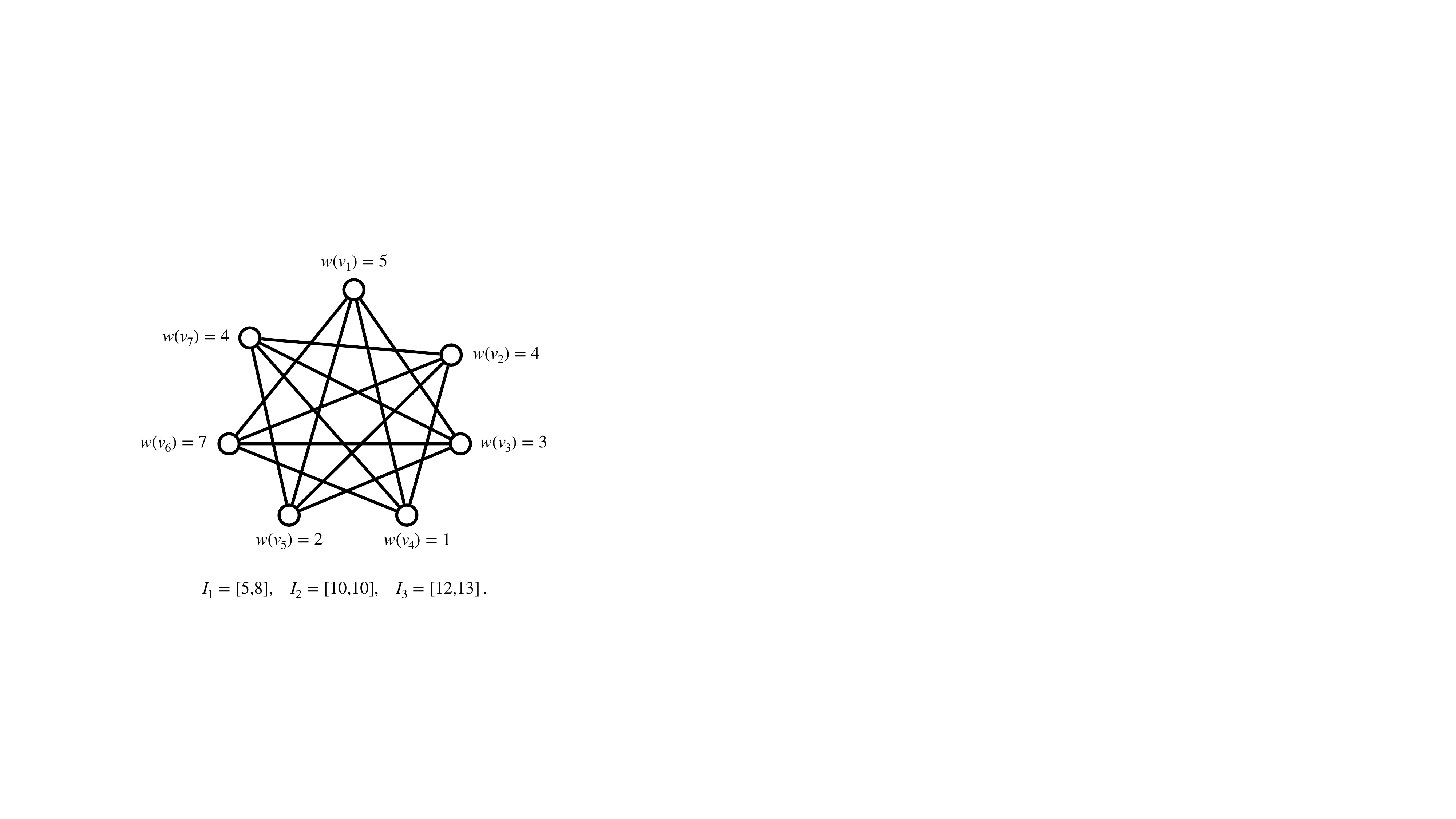}
\caption{A right-free $3$-witness of the graph $\overline{C_7}$ with star number $3$ \cite{monti2024starkpcgs}.}\label{fig:auto_complementary_path}
\end{figure}

\section{Conclusions and open problems}
In this work we study how different graph operations, including the insertion of twins, pendant vertices, universal vertices, and isolated vertices, affect the graph's star number. Then we use these results to compute the star number for lobsters and establish an upper bound for the star number of acyclic graphs. Many problems remain open. 

Firstly, it is interesting to improve the lower bound for the star number of acyclic graphs. Indeed, the only lower bound we have is the trivial value of $2$, which applies to all graphs containing an asteroidal triple.   Next it is interesting to determine whether there exists some constant $c$ such that all acyclic graphs are star-$c$-PCGs.

Finally, it may be worth to consider the effects of other classical graph operations such as the disjoint union of two graphs. While it is not difficult to see that $\gamma(G_1+G_2) \leq \gamma(G_1) + \gamma(G_2)$ improving this result is certainly interesting. Another operation of interest is the cartesian product of two graphs $\gamma(G_1 \times G_2)$. 
   



\section*{Availability of Data and Materials}
No new data were generated or analysed in support of this research.

\section*{Funding}
This research was supported in part by the project "EXPAND: scalable algorithms for EXPloratory Analyses of heterogeneous and dynamic Networked Data", funded by MUR Progetti di Ricerca di Rilevante Interesse Nazionale (PRIN) Bando 2022 - Grant ID 2022TS4Y3N.

\nocite{*}

\bibliographystyle{plain}
\bibliography{my_references}

\end{document}